\documentclass[11pt,oneside]{amsart}
\usepackage{amsaddr}

\usepackage[utf8]{inputenc}
\usepackage[T1]{fontenc}
\usepackage{microtype}

\usepackage[margin=1in]{geometry}

\usepackage{amsmath}
\usepackage{amssymb}
\usepackage{amsthm}
\usepackage{mathtools}
\usepackage{thmtools}

\usepackage[style=trad-alpha,
            natbib=true]{biblatex}

\usepackage[tt=false]{libertine}
\usepackage[libertine]{newtxmath}

\usepackage{array}
\usepackage{booktabs}
\usepackage[inline,shortlabels]{enumitem}
\usepackage{graphicx}
\usepackage{xcolor}

\usepackage{hyperref}
\usepackage{cleveref}

\hypersetup{
  linkcolor  = blue!80!black,
  citecolor  = blue!80!black,
  colorlinks = true
}

\theoremstyle{plain}
\newtheorem{theorem}{Theorem}[section]
\newtheorem{corollary}[theorem]{Corollary}
\newtheorem{proposition}[theorem]{Proposition}
\newtheorem{lemma}[theorem]{Lemma}

\theoremstyle{remark}
\newtheorem*{remark*}{Remark}

\DeclarePairedDelimiter{\paren}{\lparen}{\rparen}

\frenchspacing

\let\temp\phi
\let\phi\varphi
\let\varphi\temp

\makeatletter
\let\@afterindenttrue\@afterindentfalse
\@afterindentfalse
\makeatother

\makeatletter
\newcommand{\todo}[1]{\textcolor{red}{[TODO\@ifnotempty{#1}{: #1}]}}
\makeatother

\newcommand{\cA}{\mathcal A}
\newcommand{\cB}{\mathcal B}
\newcommand{\cC}{\mathcal C}
\newcommand{\eps}{\varepsilon}

\newcommand{\PredM}{\textsc{PredictiveMarker}}
\newcommand{\Blind}{\textsc{BlindOracle}}
\newcommand{\LNM}{\textsc{LNonMarker}}
\newcommand{\M}{\textsc{Marker}}
\newcommand{\FtP}{\textsc{FollowThePrediction}}
\newcommand{\OPT}{\mathsf{OPT}}
\newcommand{\ALG}{\mathsf{ALG}}

\title{Better and Simpler Learning-Augmented Online Caching}
\author{Alexander Wei}
\address{Harvard University}
\email{weia@college.harvard.edu}

\bibliography{caching}

\begin{document}

\maketitle

\begin{abstract}
  Lykouris and Vassilvitskii (ICML 2018) introduce a model of online caching with machine-learned advice, where each page request additionally comes with a prediction of when that page will next be requested. In this model, a natural goal is to design algorithms that (1) perform well when the advice is accurate and (2) remain robust in the worst case a la traditional competitive analysis. Lykouris and Vassilvitskii give such an algorithm by adapting the \textsc{Marker} algorithm to the learning-augmented setting. In a recent work, Rohatgi (SODA 2020) improves on their result with an approach also inspired by randomized marking. We continue the study of this problem, but with a somewhat different approach: We consider combining the \textsc{BlindOracle} algorithm, which just na\"ively follows the predictions, with an optimal competitive algorithm for online caching in a black-box manner. The resulting algorithm outperforms all existing approaches while being significantly simpler. Moreover, we show that combining \textsc{BlindOracle} with \textsc{LRU} is in fact optimal among deterministic algorithms for this problem.
\end{abstract}

\section{Introduction}\label{sec:intro}

Traditionally, the study of online algorithms focuses on \emph{worst-case} robustness, with algorithms providing the same competitive guarantee against the offline optimal over all inputs. In recent years, however, there has been a surge of interest in studying online algorithms in the presence of structured inputs
\cite{MV17, LV18, PSK18, GP19, KPSSV19, KPSV20, Rohatgi20, LLMV20, Mitzenmacher20}. A principal motivation for these works is the philosophy of beyond worst-case analysis \cite{KP00, Roughgarden19}: Many practical settings have inputs that follow restricted patterns, making classical worst-case competitive analysis too pessimistic to inform practice. In particular, the worst-case examples classical competitive analyses guard against often do not materialize.  Furthermore, algorithms designed with the worst case in mind can be hamstrung by these considerations, ending up unnatural and losing performance on ``nice'' inputs.

\emph{Learning-augmented} online algorithms, introduced by \citet{LV18, PSK18}, is a beyond worst-case framework motivated by the powerful predictive abilities of modern machine learning. The structure of the input is assumed to come in the form of a machine-learned predictor that provides predictions of future inputs. A concern with this setup may be that machine learning models typically have few worst-case guarantees. Nonetheless, with learning-augmented algorithms, we want the best of both worlds: Given a predictor, the objective is to design algorithms that
\begin{enumerate*}
  \item perform well in the optimistic scenario, where the predictor has low error, and
  \item remain robust in the classical worst-case sense, when the predictor can be arbitrarily bad.
\end{enumerate*}
That is, we would like our algorithm to be $c(\eta)$-competitive against the offline optimal on all inputs, where $c$ is a function of the predictor error $\eta$ such that $\max_\eta c(\eta)\le\gamma$ for some $\gamma\ge 1$. Such an algorithm is said to be \emph{$\gamma$-robust}.

The focus of our work is learning-augmented online caching \cite{LV18, Rohatgi20}. In the \emph{online caching} (a.k.a. \emph{online paging}) problem, one seeks to maintain a cache of size $k$ online while serving requests for pages that may or may not be in the cache. For simplicity, we assume that pages must always be served from the cache and that bringing a page into the cache has unit cost. (In particular, if the cache is full, bringing a page into the cache requires also \emph{evicting} a page already in the cache.) Thus, we seek to minimize the number of requests for which the page is not already in the cache, i.e., the number of \emph{cache misses}. This is a classical online problem that has been the subject of extensive study over the past several decades (see \cite{BE98} for an overview). From the worst-case perspective, this problem is well-understood for not only the version stated above \cite{ST85, FKLMSY91, ACN00}, but also for weighted generalizations \cite{BBN12, BBN12b}.

Online caching in the learning-augmented context was first considered by \citet{LV18}. They introduce a model of prediction where the predictor, upon the arrival of each page, predicts the next time that this page will be requested. They show that the \Blind{} algorithm, which follows the predictor na\"ively and evicts the page with the latest predicted arrival time, can have unbounded competitive ratio (i.e., is \emph{non-robust}). They then give a different algorithm, \PredM{}, based on the \M{} algorithm of \citet{FKLMSY91}, that achieves a competitive ratio of
\[ 2 + O\paren*{\min\paren*{\sqrt{\frac{\eta}{\OPT}}, \log k}}, \]
where $\eta$ is the $\ell_1$ error of the predictor and $\OPT$ is the cost of the offline optimal. (See \Cref{sec:setup} for precise definitions.) In \cite{Rohatgi20}, Rohatgi introduces the \LNM{} algorithm, which is also based on randomized marking (but eschews the framework somewhat), and shows that this algorithm obtains a competitive ratio of
\[ O\paren*{1 + \min\paren*{\frac{\log k}{k}\frac{\eta}{\OPT}, \log k}}. \]
This bound is obtained by first constructing a non-robust algorithm and then using a black-box combination technique discussed in \cite{LV18} to combine this non-robust algorithm with the \M{} algorithm. Rohatgi also provides a lower bound of
\[ \Omega\paren*{\min\paren*{\log\paren*{\frac{1}{k\log k}\frac{\eta}{\OPT}}, \log k}} \]
for the competitive ratio of any learning-augmented online algorithm for caching in terms of $k$, $\OPT$, and $\eta$.

\subsection{Our Contribution}

We show that the strikingly simple approach of combining \Blind{} with an $O(\log k)$-competitive online caching algorithm (e.g., \M{}) in a black-box fashion obtains a competitive ratio bound of
\[ O\paren*{1 + \min\paren*{\frac 1k\frac{\eta}{\OPT}, \log k}}, \]
improving on that of \LNM{}. Thus, although \Blind{} is non-robust \cite{LV18}, we show that it should not be abandoned entirely. In fact, it has excellent performance when $\eta/\OPT$ is small. So when this algorithm is combined with a $O(\log k)$-competitive algorithm, we start seeing performance improvement over the robust algorithm starting at $\eta/\OPT = O(k\log k)$.

And although our improvement in the competitive ratio is slight, previous approaches for learning-augmented online caching \cite{LV18, Rohatgi20} have relied on much more intricate constructions based on randomized marking. We therefore believe that our simple approach may yield better practical performance and may generalize more readily to other learning-augmented settings. (Indeed, we note that the deterministic combining algorithm is particularly simple: Just ``follow'' the algorithm with the better performance thus far.)

We also give precise bounds on the constant factors in the competitive ratios that we obtain. \citet{FKLMSY91,BB00} provide optimal bounds for combining online algorithms online in a black-box manner, with better constants than the approach discussed in \cite{LV18} and applied in \cite{Rohatgi20}. By composing a careful competitive analysis of \Blind{} with these ``combiners,'' we obtain constants in the competitive ratio that are lower than those of previous work.

Finally, we show that combining \Blind{} with a $k$-competitive deterministic algorithm (e.g., LRU \cite{ST85}) is the best one could hope to do among deterministic algorithms for learning-augmented online caching. In particular, we show that a linear dependence on $\eta/(k\cdot\OPT)$ in the competitive ratio is necessary. Therefore, if a logarithmic dependence on $\eta/(k\cdot\OPT)$ is to be achieved, as in Rohatgi's lower bound, then randomization is needed, (perhaps surprisingly) even in the regime where $\eta/(k\cdot\OPT)$ is bounded.

Stated formally, our main result analyzing \Blind{} is the following:

\begin{theorem}[restate=maintheorem,label=theorem:main]
  For learning-augmented online caching, \Blind{} obtains a competitive ratio of
  \[ \min\paren*{1 + 2\frac{\eta}{\OPT}, 2 + \frac{4}{k-1}\frac{\eta}{\OPT}}, \]
  where $\eta$ is the $\ell_1$ loss incurred by the predictor and $\OPT$ is the offline optimal cost. (For precise definitions, see \Cref{sec:setup}.)
\end{theorem}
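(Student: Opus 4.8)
The plan is to bound the cost of \Blind{} by running it in parallel with the offline optimum $\OPT$, taken to be the lazy ``furthest-in-the-future'' (Belady) algorithm; note that \Blind{} is literally Belady's eviction rule applied to the \emph{predicted} next-request times, so the two coincide on a perfect predictor. First, classify each request by whether \Blind{} and $\OPT$ hit or fault on it, and let $\#B$, $\#C$, $\#D$ count the requests on which the pair $(\Blind,\OPT)$ does (fault, fault), (hit, fault), and (fault, hit), respectively. Then $\ALG=\#B+\#D$ and $\OPT=\#B+\#C$, so $\ALG=\OPT+(\#D-\#C)$, and it suffices to bound $\#D-\#C$: the extra faults \Blind{} makes on pages $\OPT$ currently holds. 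To control these I track the potential $\Phi=\abs{C_{\Blind}\setminus C_{\OPT}}$ (equal to $\abs{C_{\OPT}\setminus C_{\Blind}}$ once the caches are full). A short check of the four event types shows $\Phi$ increases only on (fault, fault) requests, and then only by $1$ and only when \Blind{} and $\OPT$ evict distinct, currently-shared pages, while (hit, fault) and (fault, hit) requests never increase $\Phi$. Summing, the number of (fault, hit) requests that \emph{decrease} $\Phi$ is at most $\#B\le\OPT$, whence $\ALG\le 2\OPT+N$, where $N$ counts the (fault, hit) requests at which \Blind{} evicts a page $\OPT$ still holds and does \emph{not} evict any page $\OPT$ lacks.

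Both bounds then come from charging ``bad'' events to $\ell_1$ prediction error. Whenever \Blind{} evicts a page $p$ that $\OPT$ keeps but from \Blind{}'s current cache $\OPT$ would have evicted a different page $q$, the predicted times satisfy $\hat y_p\ge\hat y_q$ while the true next-request times satisfy $t_p<t_q$; the triangle inequality gives $\abs{\hat y_p-t_p}+\abs{\hat y_q-t_q}\ge t_q-t_p\ge 1$, i.e.\ at least one unit of combined prediction error on the two (distinct) requests defining $\hat y_p$ and $\hat y_q$. Since between two consecutive requests to a page \Blind{} evicts that page at most once, each request's error term $\abs{h(t)-n(t)}$ is charged only $O(1)$ times, so these locally-explained bad events number $O(\eta)$; combined with the cascade invariant below (needed because some (fault, hit) events are not locally explained) this yields $\#D-\#C\le 2\eta$ and hence the first bound $\ALG\le\OPT+2\eta$. (An alternative, more intuitive view of this bound: \Blind{} on $\sigma$ behaves like Belady on the instance in which each page's requests are shifted to their predicted times, an instance obtained from $\sigma$ by moving requests a total distance $\le\eta$, which changes the offline optimum by at most $\eta$.) For the second bound I instead bound $N$: at such an event \Blind{} evicts the predicted-furthest of its $k$ cached pages, of which $\OPT$ holds roughly $k-\Phi$; when $\Phi$ is small the responsible prediction must ``overtake'' $\Omega(k)$ well-placed predictions, and when $\Phi$ is large it was itself built up by that many earlier error-charged (fault, fault) events, so an amortized accounting over the whole run forces $\Omega(k\cdot N)$ total $\ell_1$ error, giving $N\le\frac{4}{k-1}\eta$ and thus $\ALG\le 2\OPT+\frac4{k-1}\eta$. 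Taking the minimum of the two bounds proves the theorem.

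The main obstacle throughout is that prediction errors \emph{cascade}: one bad prediction makes \Blind{}'s cache diverge from $\OPT$'s, and that divergence alone can induce (fault, hit) events arbitrarily far in the future, so not every bad event is locally a misordering. The heart of the proof is therefore an invariant relating the potential $\Phi$ — the outstanding, unrepaired divergence — to the prediction error seen but not yet charged, ensuring that each unit of $\ell_1$ error funds only $O(1)$ extra faults in total; making this precise, together with the $\Omega(k)$-error-per-extra-fault amortization for the second bound, is where the care goes. Warm-up before the caches fill, ties among equal predictions, and the convention that the last request to each page has $h=n=\infty$ are handled up front and affect only lower-order terms.
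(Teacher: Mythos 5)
Your high-level architecture is essentially the paper's: run \Blind{} against Belady's algorithm, track a potential measuring the divergence of the two caches, and charge the extra faults to prediction error via pairwise misorderings. The bookkeeping you do get right (the classification of events, $\ALG=\OPT+(\#D-\#C)$, the fact that $\Phi=\abs{C_{\Blind}\setminus C_{\OPT}}$ increases only on simultaneous faults, hence $\ALG\le 2\OPT+N$) is sound. But both quantitative steps --- the ones that actually produce the bounds in the theorem --- are asserted rather than proven, and you say so yourself (``making this precise \dots is where the care goes''). For the first bound, the entire content is the ``cascade invariant'' relating the outstanding divergence to uncharged error; you never state it. The paper's version of this invariant is a \emph{valid matching} $X_t\subseteq A_t\times B_t$ (pairs $(a,b)$ with $b$'s next arrival no later than $a$'s) together with the potential $\Phi=$ number of unmatched pages in $B_t$, and the inductive inequality $\Delta\ALG_{\cB}+\Delta\Phi\le\Delta\OPT+\Delta M$, where $M$ is the number of \emph{inversions} (pairs with $y_i<y_j$ but $h_i\ge h_j$); the per-case charging ensures each inversion is counted once because one of its endpoints is evicted. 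Without some such structure, ``each request's error term is charged $O(1)$ times'' gives $1+O(\eps)$, not the stated $1+2\eps$.

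The second bound is where the gap is most serious. Your claim is that each uncharged (fault, hit) event forces $\Omega(k)$ of $\ell_1$ error, argued by a small-$\Phi$/large-$\Phi$ dichotomy with no actual accounting. Two problems. First, what such an event naturally forces is $\Omega(k)$ \emph{inversions} (the evicted page's prediction overtakes $\Omega(k)$ others), not $\Omega(k)$ of $\ell_1$ error: a single prediction can overtake $k-1$ tightly clustered predictions at small individual cost, and the statement that total $\ell_1$ error is at least half the total number of inversions is a nontrivial global lemma (Rohatgi's $\eta\ge\frac12 M$), which you neither cite nor reprove and which is responsible for a factor of $2$ in the constant. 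Second, even counting inversions per event without double-counting across time requires a carefully designed potential; the paper uses $\Phi=(k-1)\Phi_0+(k-1)\Phi_1+\Phi_2$ with $\Phi_2=\sum_{(a,b)\in X_t}(z_b-z_a)$, where $z_a$ is the rank of $h_a$ among the predictions of $B_t$'s pages, precisely so that the ``overtaking'' argument can be amortized against the $2(k-1)$ of potential released on each request. Your ``amortized accounting over the whole run forces $\Omega(k\cdot N)$ total error'' is the conclusion of that analysis, not a proof of it. So: right skeleton, same route as the paper in outline, but the two lemmas that carry the theorem are missing.
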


Plugging this bound into the results of \citet{FKLMSY91, BB00} for competitively combining online algorithms online (see \Cref{sec:combine}) yields the following corollaries:

\begin{corollary}[restate=firstcorollary,label=corollary:det]
  There exists a deterministic algorithm for learning-augmented online caching that achieves a competitive ratio of
  \[ 2\min\paren*{\min\paren*{1 + 2\frac{\eta}{\OPT}, 2 + \frac{4}{k-1}\frac{\eta}{\OPT}}, k}. \]
\end{corollary}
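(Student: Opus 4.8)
The plan is to obtain \Cref{corollary:det} by running \Blind{} and a $k$-competitive deterministic algorithm side by side and feeding the pair into the deterministic ``combiner'' for online caching developed in \Cref{sec:combine} (following \citet{FKLMSY91, BB00}). Three facts go into this. First, \Cref{theorem:main} gives, on every request sequence $\sigma$, the instance-dependent bound $\mathrm{cost}_{\Blind}(\sigma) \le c_\sigma\cdot\OPT$, where $c_\sigma = \min\paren*{1 + 2\frac{\eta}{\OPT},\, 2 + \frac{4}{k-1}\frac{\eta}{\OPT}}$ and $\eta, \OPT$ are the predictor loss and offline optimum for that particular $\sigma$. Second, \textsc{LRU} is a deterministic $k$-competitive algorithm \cite{ST85}, so $\mathrm{cost}_{\textsc{LRU}}(\sigma) \le k\cdot\OPT$ up to the usual additive constant. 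Third, the combiner of \Cref{sec:combine} is a deterministic online algorithm that, given black-box access to two deterministic online caching algorithms $A$ and $B$ (it simulates them on the request sequence, forwarding predictions as needed), serves every sequence at cost at most $2\min\paren*{\mathrm{cost}_A(\sigma),\, \mathrm{cost}_B(\sigma)}$ up to lower-order terms. Since \Blind{} merely follows its predictions it is deterministic, so the combiner applies to the pair $(\Blind, \textsc{LRU})$ and produces a deterministic algorithm.

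Chaining these gives the bound immediately. For any $\sigma$,
\[
  \min\paren*{\mathrm{cost}_{\Blind}(\sigma),\, \mathrm{cost}_{\textsc{LRU}}(\sigma)} \;\le\; \min\paren*{c_\sigma\cdot\OPT,\; k\cdot\OPT} \;=\; \min\paren*{c_\sigma,\, k}\cdot\OPT ,
\]
using monotonicity of $\min$ and $\OPT \ge 0$; applying the combiner's guarantee multiplies the left side by at most $2$, so the combined deterministic algorithm has cost at most $2\min(c_\sigma, k)\cdot\OPT$, i.e., competitive ratio $2\min\paren*{\min\paren*{1 + 2\frac{\eta}{\OPT},\, 2 + \frac{4}{k-1}\frac{\eta}{\OPT}},\, k}$, as claimed.

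The corollary itself has essentially no mathematical content beyond this composition; all the difficulty lives upstream, in the competitive analysis behind \Cref{theorem:main} and in the construction and analysis of the combiner in \Cref{sec:combine}. The only things that need care here are bookkeeping: that the combiner's guarantee is genuinely \emph{pointwise} in $\sigma$, so that we may legitimately combine it with the instance-dependent quantity $c_\sigma$ (which depends on $\eta$ and $\OPT$ for that $\sigma$) rather than with a single worst-case ratio; and that the additive slack in ``$k$-competitive'' for \textsc{LRU} together with the $O(k)$ transition overhead internal to the combiner are absorbed into the additive constant permitted in the definition of competitive ratio. If \Cref{sec:combine} states its combiner for more than two input algorithms or with a different normalization, one specializes to the case of two deterministic algorithms and verifies that the multiplicative constant is exactly $2$ — this is the only step where the particular choice of combiner influences the stated constant.
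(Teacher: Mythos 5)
Your proposal is correct and takes exactly the route the paper does: combine \Blind{} with \textsc{LRU} via the deterministic combiner of \Cref{theorem:det_comb}, using \Cref{theorem:main} to bound \Blind{}'s cost and the $k$-competitiveness of \textsc{LRU}, noting that the combiner's guarantee is pointwise in $\sigma$ so the instance-dependent minimum goes through. The paper's own proof is just a one-line version of this same composition.
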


\begin{corollary}[restate=secondcorollary,label=corollary:rand]
  There exists a randomized algorithm for learning-augmented online caching that achieves a competitive ratio of
  \[ (1 + \eps)\min\paren*{\min\paren*{1 + 2\frac{\eta}{\OPT}, 2 + \frac{4}{k-1}\frac{\eta}{\OPT}}, H_k} \]
  for any $\eps\in (0, 1/4)$.\footnote{The trade-off in $\eps$ and the additional cost is additive; thus, it does not factor into the competitive ratio.} (Here, $H_k = 1 + \frac 12 + \frac 13 + \cdots + \frac 1k = \ln(k) + O(1)$ is the $k$-th harmonic number.)
\end{corollary}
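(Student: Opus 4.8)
The plan is to derive this corollary as a black-box consequence of \Cref{theorem:main} together with the randomized online-combining result of \citet{FKLMSY91, BB00}, recalled in \Cref{sec:combine}. That combiner takes a finite family of online caching algorithms $\cA_1,\dots,\cA_m$ and, for any $\eps\in(0,1/4)$, produces a single randomized online algorithm whose expected cost on every request sequence $\sigma$ is at most $(1+\eps)\min_i \ALG_i(\sigma) + O_\eps(k)$, where $\ALG_i(\sigma)$ is the (expected) cost of $\cA_i$ on $\sigma$ and the additive term depends only on $k$ and $\eps$. (Intuitively, $O_\eps(k)$ accounts for occasionally reconfiguring a size-$k$ cache in order to switch to whichever candidate is currently performing best.)

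First I would apply this combiner with $m = 2$, taking $\cA_1$ to be \Blind{} and $\cA_2$ to be an $H_k$-competitive randomized paging algorithm, such as the one of \citet{ACN00}. Fix a request sequence $\sigma$ and let $\eta$ be the $\ell_1$ loss of the predictor on $\sigma$. By \Cref{theorem:main}, $\ALG_1(\sigma) \le \min\paren*{1 + 2\frac{\eta}{\OPT},\, 2 + \frac{4}{k-1}\frac{\eta}{\OPT}}\cdot\OPT$, while $\ALG_2(\sigma) \le H_k\cdot\OPT$ by the guarantee of $\cA_2$. Taking the minimum of these two bounds and substituting into the combiner's guarantee shows that the resulting algorithm has expected cost at most
\[ (1+\eps)\min\paren*{\min\paren*{1 + 2\frac{\eta}{\OPT},\, 2 + \frac{4}{k-1}\frac{\eta}{\OPT}},\, H_k}\cdot\OPT + O_\eps(k) \]
on $\sigma$. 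Since the $O_\eps(k)$ term is additive and independent of $\sigma$, it does not enter the competitive ratio --- this is the content of the footnote --- so the combined algorithm attains exactly the claimed competitive ratio, and the restriction $\eps\in(0,1/4)$ is inherited from the combining theorem.

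Given \Cref{theorem:main} and the combiners of \Cref{sec:combine}, the corollary is thus essentially a plug-in, and the two points I would be careful about are the following. First, \Cref{theorem:main} gives an \emph{instance-dependent} bound --- the ratio for \Blind{} is a function of $\eta$, which depends on $\sigma$ --- so it matters that the combiner competes against the realized cost $\ALG_1(\sigma)$ rather than against any fixed competitive bound; the guarantee of \citet{FKLMSY91, BB00} is stated in precisely this per-instance form, which is what makes the argument valid. Second, the candidate $\cA_2$ is itself randomized, so one should check (as \Cref{sec:combine} does) that the combiner treats its inputs as black boxes and that its guarantee holds with each $\ALG_i(\sigma)$ interpreted as an expectation; linearity of expectation then reproduces the displayed bound verbatim. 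I expect verifying these two points --- rather than establishing any new estimate --- to be the only real obstacle.
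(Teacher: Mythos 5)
Your proposal is correct and follows essentially the same route as the paper: apply \Cref{theorem:main} to bound \Blind{}'s cost, take the $H_k$-competitive algorithm of \citet{ACN00} (\textsc{Equitable}) as the second candidate, and combine them with the randomized combiner of \Cref{theorem:rand_comb}, noting that the $O(\eps^{-1}k)$ term is additive. The only nitpick is attribution: the randomized $(1+\eps)$ combiner is due to \citet{BB00}, not \citet{FKLMSY91}, whose result is the deterministic factor-$2$ version.
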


Finally, we state our matching lower bound for \Cref{corollary:det} on deterministic algorithms for learning-augmented online caching:

\begin{theorem}[restate=lowerboundtheorem,label=theorem:lower]
  The competitive ratio bound for any deterministic learning-augmented online caching algorithm must be at least
  \[ 1 + \Omega\paren*{\min\paren*{\frac{1}{k}\frac{\eta}{\OPT}, k}}. \]

\end{theorem}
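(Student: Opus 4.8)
The plan is to split on the two terms of the $\min$. When $\tfrac1k\tfrac{\eta}{\OPT}\gtrsim k$ (equivalently $\eta/\OPT\gtrsim k^2$) the claimed bound reduces to $1+\Omega(k)$, and here it suffices to reuse the classical $\Omega(k)$ lower bound for deterministic paging: over a universe of $k+1$ pages, repeatedly request whichever page the algorithm is currently missing. This forces a miss on every request regardless of the predictions, while $\OPT$ pays only once per $k$ requests; scaling the length fixes $\OPT$, and since the predictions are then irrelevant to both costs we may attach ones whose total error is exactly the target $\eta$. The rest of the argument concerns the regime $1\lesssim\tfrac1k\tfrac{\eta}{\OPT}\lesssim k$.

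For that regime, fix a deterministic algorithm $\ALG$ and let $r$ be of the same order as the desired value of $\tfrac1k\tfrac{\eta}{\OPT}$, so $r\in[\,\Omega(1),k\,]$. I would build an instance over a universe of $k+1$ pages made of $M$ \emph{phases}, each being a full round-robin cycle $0,1,\dots,k$ (the \emph{lull}) followed by $r$ requests that each ask for the single page currently missing from $\ALG$'s cache (the \emph{burst}). The burst forces $\ALG$ to miss on all $r$ of its requests no matter what $\ALG$ does, and a lull always costs $\ALG$ at least one miss, so $\ALG$ pays $\ge(1+r)M$. By contrast $\OPT$ pays only once per phase: on its single forced miss during the lull it may evict whatever page it likes, and being offline it can thereby enter the burst missing a page that the burst never requests — one exists since the burst touches only $r<k+1$ distinct pages — so it pays nothing in the burst, giving $\OPT=\Theta(M)$. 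Hence the competitive ratio is $\ge(1+r)/\Theta(1)=1+\Omega(r)$.

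The crux is choosing the predictions so that the total error is $\Theta(kr)$, which is exactly the amount needed for $\tfrac1k\tfrac{\eta}{\OPT}$ to match $r$ (and is in fact necessary: were the error $o(kr)$, then specializing to $\ALG=\Blind$ would contradict the bound ``cost of $\Blind$ is at most $2\OPT+\tfrac4{k-1}\eta$'' from \Cref{theorem:main}). The key idea is that every request should predict the \emph{truth} wherever possible: a request issued during a lull declares the time at which that page is next requested assuming the round-robin simply continues into the following lull — a prediction that is exactly correct for every page the intervening burst does not re-request, and off by at most $r+k=O(k)$ for each of the $\le r$ pages it does; the requests inside a burst can likewise be predicted essentially correctly. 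Because these declarations are fixed in advance — they do not depend on which pages the burst happens to touch — there is no circularity between the predictions and $\ALG$'s adaptive behavior, and the per-phase error comes out to $O(kr)$, concentrated on the $O(r)$ pages that the burst disrupts. The main obstacle is therefore not a single clever step but the bookkeeping: checking that the lull genuinely lets $\OPT$ re-position to a ``safe'' missing page each phase, that the only predictions that go wrong are the $O(r)$ per phase claimed, and that each is wrong by only $O(k)$.

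Finally one tallies: $\OPT=\Theta(M)$, $\ALG\ge(1+r)M$, $\eta=\Theta(Mkr)$, so $\tfrac1k\tfrac{\eta}{\OPT}=\Theta(r)$ and the ratio is $1+\Omega(r)=1+\Omega\paren*{\tfrac1k\tfrac{\eta}{\OPT}}$, with $r$ reaching up to $\Theta(k)$. To obtain values of $\tfrac1k\tfrac{\eta}{\OPT}$ below a constant one instead interleaves burst phases (still of the shortest useful length) with plain round-robin phases, which add to $\OPT$ but carry no excess miss and no error; taking the fraction of burst phases to be $\Theta\paren*{\tfrac1k\tfrac{\eta}{\OPT}}$ drives both the error-to-$\OPT$ ratio and the excess miss rate down to that level, yielding $1+\Omega\paren*{\tfrac1k\tfrac{\eta}{\OPT}}$ all the way down. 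Together with the first regime this gives $1+\Omega\paren*{\min\paren*{\tfrac1k\tfrac{\eta}{\OPT},k}}$, as claimed.
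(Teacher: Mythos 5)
Your construction is correct and is essentially the paper's: both build phases from a predictable round-robin (with near-exact predictions) followed by $O(r)$ adaptive requests targeting the deterministic algorithm's cache, each mispredicted by only $O(k)$, so that $\ALG-\OPT=\Omega(r)$ while $\eta=O(rk)$ per phase. The only cosmetic difference is that the paper repeats a $k$-page round-robin $k$ times and uses an extra page $Q_0$ to force the algorithm into a known state (or charge it $k$), whereas you use a single round-robin over $k+1$ pages so that ``the missing page'' is always well-defined; the bookkeeping caveats you flag (e.g., that $\OPT$ may pay $2$ rather than $1$ in a phase, and that burst predictions should be fixed causally rather than set to ``the truth'') are real but harmless to the asymptotics.
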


\subsection{Related Work}

In addition to the predecessor works by \citet{LV18, Rohatgi20} on learning-augmented online caching, there have been several other recent papers in the space of learning-augmented online algorithms: \citet{MV17} study repeated posted-price auctions, \citet{PSK18, GP19} study the ski rental problem, and \citet{PSK18, LLMV20, Mitzenmacher20} study online scheduling. Of these, the scheduling algorithm of \citet{PSK18} is the most similar in spirit to this present work: Both algorithms are based on combining a na\"ive and optimistic algorithm with a robust algorithm.

Other threads of research falling under beyond worst-case online algorithms include work on combining multiple algorithms with different performance characteristics \cite{FKLMSY91, BB00, MNS12, GP19}, designing online algorithms with distributional assumptions (e.g., stochasticity) on the input \cite{KP00, BS12, MGZ12}, and semi-online algorithms, where the input is assumed to have a predictable offline component and an adversarial online component \cite{KPSSV19, KPSV20}.

The idea of learning-augmentation has also been explored in many other algorithmic and data structural settings in recent years. These include learned indices \cite{KBCDP18}, bloom filters \cite{Mitzenmacher18}, frequency estimation in streams \cite{HIKV19}, and nearest neighbor search \cite{DIRW19}, among others.

Finally, advice for online algorithms has also been considered with a more complexity theoretic spirit through the study of advice complexity of online algorithms; see the survey of \citet{BFKLM17}.

\subsection{Recent Developments}

Recently, in work done independently of and concurrently with this paper, \citet{ACEPS20} also study a \Blind{}-like algorithm, which they term \FtP{}, in the more general setting of learning-augmented metrical task systems; they also use the ``combiner'' of \citet{BB00} to make this algorithm robust. However, their prediction model, when specialized to online caching, is incomparable to that of \citet{LV18} (which we follow).\footnote{Namely, their algorithms expect predictions to be in a different form: They expect predictions to be cache states (i.e., the set of pages in the cache at time $t$) rather than next arrival times of pages. Moreover, there exist sequences of ``corresponding'' inputs for each of these two models such that the predictor error approaches infinity in one model while remaining constant in the other.} Thus, the theoretical results proved in these two models do not imply each other.

\section{Preliminaries}

\subsection{Setup and Notation}\label{sec:setup}

In the online caching problem, we receive a sequence $\sigma = (\sigma_1,\ldots,\sigma_n)$ of page requests online, and our goal is to serve these requests using a cache of size $k$ while minimizing cost. In this problem, pages must be served from the cache and can be served at no cost; however, evicting a page from the cache has unit cost.\footnote{Note that this is equivalent to the ``standard'' version, where each cache miss has unit cost, up to a constant of $k$.}

We will establish competitive bounds comparing the performance of two online caching algorithms $\mathcal A$ and $\mathcal B$. More precisely, we will show bounds of the form $$
  \mathsf{ALG}_{\mathcal B}(\sigma)\le\gamma\cdot\mathsf{ALG}_{\mathcal A}(\sigma) + O(1),
$$ where $\mathsf{ALG}_{\mathcal A}(\sigma)$ and $\mathsf{ALG}_{\mathcal B}(\sigma)$ are the costs of $\mathcal A$ and $\mathcal B$, respectively, as measured in number of evictions made while serving a sequence $\sigma$ of page requests. We will also compare our costs to the offline optimal algorithm $\mathsf{OPT}$, whose cost $\mathsf{OPT}(\sigma)$ is the minimum possible cost of serving request sequence $\sigma$. We will omit the argument $\sigma$ when the context is clear (i.e., just writing $\ALG_{\mathcal A}$ to represent $\ALG_{\mathcal A}(\sigma)$).

In our analysis, we use $A_t$ and $B_t$ to denote the cache states of $\mathcal A$ and $\mathcal B$, respectively, just before the $t$-th request. Formally, $A_t$ and $B_t$ are subsets of $\{1,\ldots,t-1\}$ of size at most $k$, containing for each cached page the index at which it was last served. That is, when serving the $t$-th request, we remove some old request index $t'$ from the cache and insert $t$. Thus, if $t'$ is such that $\sigma_{t} = \sigma_{t'}$, this operation is free; otherwise, it has unit cost. In the sequel, we will also refer to these indices $t$ as \emph{page requests}.

In the learning-augmented online caching problem, the $t$-th page request comes with a prediction $h_t$ for the next time page $\sigma_t$ is requested. That is, at the time of the $t$-th request, our algorithm receives the pair $(\sigma_t, h_t)$. Let $h = (h_1,\ldots,h_n)$ be the tuple of all $n$ predictions. To define a notion of loss, let $y_t$ denote for each $t$ the next time page $t$ is actually requested, with $y_t = n + 1$ if page $\sigma_t$ is never requested again. The $\ell_1$ loss is then defined to be $$
    \eta(\sigma, h) = \sum_t |h_t - y_t|.
$$ We will omit arguments to $\eta$ if the context is clear. Note that if $\eta(\sigma, h) = 0$, then the offline optimal can be obtained, as the optimal algorithm always evicts the page that is next requested furthest into the future.

In stating our bounds, the essential quantity is often $\eta/\OPT$. To make this clear, we take $\eps = \eta/\OPT$ and state our bounds in terms of $\eps$ in the sequel.

\subsubsection{Inversions}

Call a pair $(i, j)$ of page requests an \emph{inversion} if $y_i < y_j$ but $h_i\ge h_j$. Let $M(\sigma, h)$ denote the total number of inversions between the pair of sequences $\sigma$ and $h$. And as above, we will omit the arguments to $M$ when the context is clear.

\subsubsection{BlindOracle}

We now formally define the \Blind{} algorithm as follows: For each page request, if the requested page is already in the cache, do nothing. Otherwise, evict the page request $p$ whose predicted next arrival time $h_p$ is furthest away among all $p\in A_t$, with ties broken consistently (e.g., by always evicting the least recently used page with maximal $h_p$).

\subsection{Combining Online Algorithms Competitively}\label{sec:combine}

In this section, we state some classical bounds on competitively combining online algorithms, due to \citet{FKLMSY91} and \citet{BB00}. This type of ``black-box'' combination was also considered by \citet{LV18}, but their approach has a worst constant than that of \citet{FKLMSY91}. We also note that results of a similar flavor are proven by \citet{PSK18, MNS12}, but for other online problems.

The question of combining multiple online algorithms while remaining competitive against each was first considered in the seminal paper of \citet{FKLMSY91}. They consider combining $n$ online algorithms $\cA_1,\ldots,\cA_n$ for the online caching problem into a single algorithm $\cA$ such that $\cA$ is $C_i$-competitive against $\cA_i$ for each $i$. They show that such an $\cA$ is achievable if and only if
\[ \sum_{i=1}^n \frac 1{C_i}\le 1. \]

We will need only the special case of $n = 2$ and $C_1 = C_2 = 2$, which we state below:

\begin{theorem}[\cite{FKLMSY91}, special case]\label{theorem:det_comb}
  Given any two algorithms $\cA$ and $\cB$ for the online caching problem, there exists an algorithm $\cC$ such that
  \[ \mathsf{ALG}_{\cC}(\sigma)\le 2\min(\mathsf{ALG}_{\cA}(\sigma), \mathsf{ALG}_{\cB}(\sigma)) + O(1). \]
  Moreover, if $\cA$ and $\cB$ are deterministic, so is $\cC$.
\end{theorem}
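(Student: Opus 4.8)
The plan is to have $\cC$ act as a \emph{lazy mimic} of whichever of $\cA,\cB$ currently looks more promising, switching which one it mimics only sparingly so that the cost of realigning its cache stays under control. We may assume $\cA$ and $\cB$ are lazy (evicting only on a cache miss, and never evicting gratuitously), since any caching algorithm can be made lazy without increasing its cost; proving the theorem for the lazy versions then implies it for $\cA$ and $\cB$ themselves, as lazifying only lowers $\ALG_\cA$ and $\ALG_\cB$. Run $\cA$ and $\cB$ in simulation on $\sigma$. The algorithm $\cC$ designates one of them as its \emph{leader}, initially $\cA$, and maintains a counter equal to the number of evictions made by the leader minus the number made by the other algorithm, counted since the last change of leader. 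After serving each request (as described below), $\cC$ updates this counter, and as soon as it reaches $k$---so the leader has fallen a full cache's worth of evictions behind---$\cC$ makes the other algorithm the leader and resets the counter to $0$. This cuts $\sigma$ into consecutive \emph{phases}, one per leader, with the leaders alternating $\cA,\cB,\cA,\dots$ along the phases.

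Within a phase with leader $\cX$, the algorithm $\cC$ serves by \emph{lazy mimicking}: on a cache hit, do nothing; on a cache miss for a page $p$, if $\cX$ serves this request by evicting a page that is currently in $\cC$'s cache, evict that page, and otherwise evict any page that is in $\cC$'s cache but not in $\cX$'s cache (once caches are full, such a page exists whenever the two caches differ). A short case analysis of these rules shows that the discrepancy $\abs{C_t\setminus X_t}$ is non-increasing over the phase, and that $\cC$ pays more than $\cX$ on a request only on steps where the discrepancy strictly decreases; hence $\cC$'s cost over the phase is at most $\cX$'s cost over the phase plus $\abs{C_{t_0}\setminus X_{t_0}}\le k$, where $t_0$ is the phase's first request. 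At a leader change, $\cC$ simply re-aims its mimicking at the new leader---no explicit resynchronization is performed---and the new phase again begins with discrepancy at most $k$.

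Write $x_j$ for the cost the leader incurs during the $j$-th phase; by the previous paragraph $\ALG_\cC\le\sum_j(x_j+k)$. The switching rule pins down the phase structure: in every non-final phase the leader's in-phase cost exceeds the other algorithm's in-phase cost by exactly $k$, so every non-final $\cA$-phase contributes at least $k$ to $\ALG_\cA$, and every non-final $\cB$-phase at least $k$ to $\ALG_\cB$. Thus the number of $\cA$-phases is at most $1+\ALG_\cA/k$ and the number of $\cB$-phases at most $1+\ALG_\cB/k$; since leaders alternate, both counts---and therefore the number of phases---are $O(1)+O(\min(\ALG_\cA,\ALG_\cB)/k)$. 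Moreover $\sum_j x_j$ equals $\cA$'s cost over the $\cA$-phases plus $\cB$'s cost over the $\cB$-phases: the first term is at most $\ALG_\cA$, and since $\cB$'s in-phase cost is at most $\cA$'s in-phase cost plus $k$ in every $\cB$-phase, the second term is at most $\ALG_\cA+k\cdot(\text{number of }\cB\text{-phases})$; the same holds with $\cA$ and $\cB$ interchanged. Putting these estimates together gives $\ALG_\cC=O(\min(\ALG_\cA,\ALG_\cB))+O(k)$---a combination with \emph{some} bounded competitive factor.

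The one genuinely delicate point is sharpening this bounded factor to exactly $2$ while keeping only an additive $O(1)$. The crude summation above pays twice over at each of the $\Theta(\min(\ALG_\cA,\ALG_\cB)/k)$ leader changes---once for the $k$-eviction threshold margin that provokes the change and once for the $\le k$ discrepancy it creates---and these together contribute a further $\Theta(\min(\ALG_\cA,\ALG_\cB))$ on top of the unavoidable $2\min(\ALG_\cA,\ALG_\cB)$. The remedy, carried out in full by \citet{FKLMSY91}, is an amortized analysis using a potential that pre-charges each upcoming leader change against the very threshold margin that triggers it---a margin that lower-bounds cost already attributed to the \emph{other} algorithm and counted nowhere else---so that leader changes are free on average and only the two boundary phases leak into the $O(1)$ term. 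Finally, $\cA$, $\cB$, the counter, the choice of leader, and the (consistently tie-broken) mimicking rule are all deterministic functions of $\sigma$, so $\cC$ is deterministic whenever $\cA$ and $\cB$ are.
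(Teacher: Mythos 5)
Your construction and the within-phase accounting are fine as far as they go: the lazification step, the fact that the discrepancy $\abs{C_t\setminus X_t}$ is non-increasing and only shrinks on the steps where $\cC$ pays but the leader does not, and the resulting bound of (leader's in-phase cost) $+\,k$ per phase are all correct, and they do yield \emph{some} constant-factor combination, $\ALG_\cC = O(\min(\ALG_\cA,\ALG_\cB)) + O(k)$. But that is a strictly weaker statement than the theorem: the entire quantitative content here is the factor $2$ with only an additive $O(1)$, and your argument does not establish it. You acknowledge this and hand the "delicate point" off to \cite{FKLMSY91}, but the hand-off does not work as described. First, the potential/pre-charging story you sketch does not match your own scheme: when a leader change is triggered, the margin of $k$ that triggers it is the \emph{leader's} excess cost over the other algorithm within the phase (the other algorithm may well have paid $0$ in that phase), so it does not "lower-bound cost already attributed to the other algorithm," and it is exactly the quantity you must \emph{not} double-charge if you want the factor $2$ against the algorithm you were just following. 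Second, it is not established (and not obvious) that your specific rule---switch when the mimicked algorithm falls $k$ in-phase evictions behind---admits \emph{any} analysis achieving factor $2$; the construction of \cite{FKLMSY91} that attains the tight trade-off $\sum_i 1/C_i\le 1$ is not this fixed-threshold rule, so citing their analysis does not discharge the claim for your algorithm. In short, the proof proves a weaker theorem and asserts, without argument, that the gap to the actual statement can be closed.

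For comparison: the paper does not reprove this result either---it quotes it as a special case of \cite{FKLMSY91} and only remarks that a deterministic ``follow-the-leader'' rule suffices, where the mimicked algorithm is the one with the smaller \emph{total} eviction count so far (not your per-phase deficit rule). If your intent was likewise to lean on the citation, then the bespoke threshold-$k$ scheme and its partial analysis are beside the point; if the intent was a self-contained proof, the factor-$2$ step is a genuine missing piece, and it is the only step that matters.
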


Indeed, we note that that this can be done deterministically with a ``follow-the-leader'' approach, in which we simulate both algorithms and at each step evict any page that is not in the cache of the better performing algorithm (as measured by total number of evictions after serving the current request).

\citet{BB00} show that one can obtain a better approximation factor using a \emph{randomized} scheme, namely multiplicative weights.\footnote{The result of \citet{BB00} in fact holds for general metrical task systems.} That is, at each point in time, the probability that the combined algorithm is following one of the $n$ algorithms is given by a probability distribution over the $n$ algorithms governed by the multiplicative weights update rule. For $n = 2$, their result can be stated as follows:

\begin{theorem}[\cite{BB00}, special case]\label{theorem:rand_comb}
  Given any two algorithms $\cA$ and $\cB$ for the online caching problem and any $\eps$, $0 < \eps < 1/4$, there exists an algorithm $\cC$ such that
  \[ \mathsf{ALG}_{\cC}(\sigma)\le (1 + \eps)\min(\mathsf{ALG}_{\cA}(\sigma), \mathsf{ALG}_{\cB}(\sigma)) + O(\eps^{-1}k). \]
\end{theorem}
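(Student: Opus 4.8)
The plan is to recall that this is the two-expert, cache-diameter-$k$ specialization of the metrical task systems result of \citet{BB00}, and to sketch the underlying multiplicative-weights argument. The combined algorithm $\cC$ simulates both $\cA$ and $\cB$ and maintains weights $w_t^{\cA}, w_t^{\cB}$ (both initialized to $1$ and updated after the $t$-th request by $w_{t+1}^{\cA} = w_t^{\cA}\beta^{c_t^{\cA}}$ and likewise for $\cB$, where $c_t^{\cX}\in\{0,1\}$ counts the evictions $\cX$ makes on request $t$ and $\beta = 1-\delta$ for a small $\delta = \Theta(\eps/k)$ to be fixed). Writing $p_t$ for the normalized weights and $g(x) = e^x/(1+e^x)$ for the logistic function, $\cC$ holds at each step the cache of exactly one of $\cA, \cB$ — the one it is currently ``following'' — and maintains the invariant that this choice is distributed as $p_t$, updating it via the optimal coupling between $p_{t-1}$ and $p_t$, so that it changes which algorithm it follows with probability exactly $\|p_t - p_{t-1}\|_{\mathrm{TV}} = |p_t^{\cA} - p_{t-1}^{\cA}|$; when it does switch, it rebuilds its cache to agree with the newly followed algorithm, at a cost of at most $k$.

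First I would decompose $\cC$'s cost. On a request where $\cC$ does not change which algorithm it follows, its cost equals that of the followed algorithm, contributing $\langle p_t, c_t\rangle$ in expectation; on a request where it switches, it pays an extra $\le k$. Hence
\[
  \mathbb{E}[\ALG_{\cC}] \;\le\; \underbrace{\sum_t \langle p_t, c_t\rangle}_{\text{(I)}} \;+\; k\cdot\underbrace{\sum_t \|p_t - p_{t-1}\|_{\mathrm{TV}}}_{\text{(II)}}.
\]
Term (I) is handled by the textbook multiplicative-weights potential argument: track $\Phi_t = w_t^{\cA} + w_t^{\cB}$, apply $\beta^x \le 1-(1-\beta)x$ on $[0,1]$ factorwise to get $\Phi_{t+1}\le\Phi_t(1-(1-\beta)\langle p_t,c_t\rangle)$, and compare the final potential to the surviving weight of whichever of $\cA,\cB$ is cheaper. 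This yields $(\mathrm{I}) \le \frac{\ln(1/\beta)}{1-\beta}\min(\ALG_{\cA}, \ALG_{\cB}) + \frac{\ln 2}{1-\beta}$, i.e.\ a $(1+O(\delta))$-factor against the cheaper algorithm plus an additive $O(1/\delta)$.

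The crux is term (II). Reparametrize by $s_t = \ln(w_t^{\cA}/w_t^{\cB})$, so $p_t^{\cA} = g(s_t)$ and $s_t - s_{t-1} = (c_{t-1}^{\cB} - c_{t-1}^{\cA})\ln(1/\beta) \in \{0, \pm\ln(1/\beta)\}$. Since $|g'|\le \tfrac14$, each request moves $p^{\cA}$ by at most $\tfrac14\ln(1/\beta)$, and only on requests where exactly one algorithm pays. The up-moves of $p^{\cA}$ occur only when $\cB$ pays (at most $\ALG_{\cB}$ of them) and the down-moves only when $\cA$ pays (at most $\ALG_{\cA}$ of them); writing $U, V$ for the total up- and down-movement, the identity $U - V = p^{\cA}_{\text{final}} - p^{\cA}_{\text{initial}} \in [-1,1]$ telescopes, so $U + V \le 2\min(U,V) + 1 \le \tfrac12\ln(1/\beta)\min(\ALG_{\cA}, \ALG_{\cB}) + 1$. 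Thus $k\cdot(\mathrm{II}) \le O(\delta)\min(\ALG_{\cA},\ALG_{\cB}) + k$. Combining with (I) and taking $\delta = \Theta(\eps/k)$ (legitimate since $\eps < 1/4$ keeps $\delta < 1$) balances the multiplicative overheads against $O(\eps)$ and the additive terms against $O(\eps^{-1}k)$, giving $\mathbb{E}[\ALG_{\cC}] \le (1+O(\eps))\min(\ALG_{\cA},\ALG_{\cB}) + O(\eps^{-1}k)$; rescaling $\eps$ by an absolute constant yields the stated bound.

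I expect term (II) — controlling the switching cost — to be the only real obstacle: a naive bound charges $\Theta(k)$ per disagreement request, and there can be $\ALG_{\cA} + \ALG_{\cB}$ of those, which is hopeless when the two algorithms perform very differently. The two ingredients that rescue it are (a) running the update slowly, at rate $\delta = \Theta(\eps/k)$, so a switch occurs with probability $O(\delta)$ per disagreement request, and (b) the telescoping identity for the net drift of $p^{\cA}$, which forces $\min(\ALG_{\cA}, \ALG_{\cB})$ rather than $\ALG_{\cA} + \ALG_{\cB}$ to appear — intuitively, once one algorithm pulls clearly ahead, $g$ saturates and $p$ stops moving. Everything else is the standard two-expert multiplicative-weights bound together with bookkeeping of constants.
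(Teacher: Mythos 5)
The paper does not prove this statement at all---it is imported verbatim from Blum and Burch \cite{BB00} (specialized to two algorithms and to caching, where switching between the two simulated caches costs at most $k$), so there is no internal proof to compare against. Your sketch is a correct reconstruction of that result: the decomposition into expected service cost $\sum_t\langle p_t,c_t\rangle$ plus $k$ times the total variation movement, the standard Hedge potential bound for the first term, and the choice $\delta=\Theta(\eps/k)$ all match the structure of the Blum--Burch argument. Where you deviate slightly is in bounding the movement term: \cite{BB00} charges the per-step distribution movement directly to the per-step expected service cost (each step's TV movement is $O(\delta)\langle p_t,c_t\rangle$, so the switching cost is $O(k\delta)$ times term (I) and the $\min$ is inherited for free), whereas you bound the total movement globally via the log-odds parametrization, the bound $\lvert g'\rvert\le\frac14$, and the telescoping identity $U-V\in[-1,1]$ to force the $\min$ to appear. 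Both routes work and give the same parameters; yours is a legitimate, self-contained alternative. One small slip: from your own displayed bound $(\mathrm{II})\le\frac12\ln(1/\beta)\min(\ALG_{\cA},\ALG_{\cB})+1$ you get $k\cdot(\mathrm{II})\le O(k\delta)\min(\ALG_{\cA},\ALG_{\cB})+k$, not $O(\delta)\min(\ALG_{\cA},\ALG_{\cB})+k$; since you then set $\delta=\Theta(\eps/k)$, the correct factor $O(k\delta)=O(\eps)$ is exactly what your conclusion uses, so the final bound $(1+O(\eps))\min(\ALG_{\cA},\ALG_{\cB})+O(\eps^{-1}k)$ stands. (Also note the guarantee is on the expected cost of the randomized combiner, which is the standard reading of the theorem.)
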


\begin{remark*}
  Although we do not state the versions of these results for $n > 2$, one could imagine that they can be useful wishes to combine multiple machine-learned predictors.
\end{remark*}

\subsection{From $\ell_1$ Loss to Inversions}

We now state a lemma of \citet{Rohatgi20} that relates $\ell_1$ loss to the number of inversions, letting us lower bound the $\ell_1$ loss $\eta(\sigma, h)$ by lower bounding the number of inversions $M(\sigma, h)$. Thus, instead of reasoning in terms of $\ell_1$ loss, we will reason in terms of inversions.

\begin{lemma}[\cite{Rohatgi20}]\label{lemma:inv}
  For any $\sigma$ and $h$, $\eta(\sigma, h)\ge \frac 12 M(\sigma, h)$.
\end{lemma}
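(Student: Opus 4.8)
The statement relates the $\ell_1$ loss $\eta = \sum_t |h_t - y_t|$ to the number of inversions $M$, where an inversion is a pair $(i,j)$ with $y_i < y_j$ but $h_i \ge h_j$. This is essentially a statement that "displacement dominates the number of inversions" — the same phenomenon that makes the number of inversions a lower bound for the total displacement when sorting a permutation. So my plan is to set up a charging scheme: distribute the count of each inversion $(i,j)$ among the two page requests $i$ and $j$, charging $1/2$ to each, and then argue that each page request $t$ receives total charge at most $|h_t - y_t|$. Summing over $t$ then gives $\frac{1}{2} M = \sum_t (\text{charge to } t) \le \sum_t |h_t - y_t| = \eta$.

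To make this work I would fix a page request $t$ and bound the number of inversions involving $t$. An inversion $(t, j)$ or $(i, t)$ involving $t$ witnesses a "disagreement" between the $y$-order and the $h$-order. The key combinatorial observation is that if $(i,t)$ is an inversion with $y_i < y_t$ and $h_i \ge h_t$, then $i$ is a request whose true next-arrival time $y_i$ lies in the interval $(\ldots, y_t)$ but whose predicted time $h_i$ lies in $[h_t, \ldots)$ — loosely, $i$'s "position" moved across $t$ from the correct side to the wrong side. A cleaner route: observe that the $y_t$ values, restricted to a fixed page, are essentially increasing and in bijection with request indices, so I can think of the true order as the identity order on indices and compare it to the order induced by $h$. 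The number of requests $j$ with $y_j$ on one side of $y_t$ but $h_j$ on the other side of $h_t$ is at most the number of "slots" the prediction $h_t$ skips over relative to $y_t$, which one bounds by $|h_t - y_t|$ — because each such out-of-place request $j$ must itself have its true time $y_j$ in a window of length at most $|h_t - y_t|$ around $y_t$, and distinct requests have distinct (or appropriately separated) true arrival times on a per-page basis, so there can be at most $|h_t - y_t|$ of them. Hence $t$ is involved in at most $|h_t - y_t|$ inversions (counting both $(i,t)$ and $(t,j)$ types together), and the $1/2$-charging closes the argument.

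The step I expect to be the main obstacle is making the counting "each out-of-place request $j$ sits in a window of length $|h_t - y_t|$" fully rigorous, since $y_j$ and $h_j$ range over request indices in $\{1, \ldots, n+1\}$ but the values $y_j$ for different $j$ need not be distinct (many distinct requests of the same page have distinct $y_j$, but requests of different pages can collide), and ties in $h$ are broken only "consistently." I would handle this by reducing to the per-page structure: an inversion $(i,j)$ is only possible when... actually inversions can occur across pages, so instead I would argue directly that the map $j \mapsto y_j$ is "locally injective enough" — more precisely, I would count inversions $(i,t)$ by noting $y_i < y_t \le h_i$ wait, we only have $h_i \ge h_t$; the safe bound is that either $h_i$ or $y_i$ must differ from the corresponding value of $t$ by the full gap, and then use that at most $|h_t - y_t|$ request-indices $i$ can have $y_i$ in the relevant range by a counting argument on indices, paying a constant factor if needed — which is exactly why the lemma has the constant $\frac{1}{2}$ rather than $1$. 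I would lean on the analysis in \cite{Rohatgi20} for the precise bookkeeping, since this lemma is quoted from there, but the charging scheme above is the conceptual heart.
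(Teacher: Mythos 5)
First, note that the paper does not prove this lemma at all---it is imported verbatim from \cite{Rohatgi20}---so the only question is whether your sketch stands on its own. It does not: the central step, charging each inversion $\tfrac12$--$\tfrac12$ to its two endpoints and bounding the charge at $t$ by $|h_t-y_t|$, is provably impossible. Since each inversion distributes total weight $1$, your scheme would establish $M\le\eta$, but the constant $2$ in the lemma is tight: take a request sequence in which $m$ requests have distinct next-arrival times $y_i=m+1,\dots,2m$ (e.g.\ request $m$ distinct pages twice in the same order) and predict $h_i=c$ for all of them with $c$ in the middle of that range. Every pair among these $m$ requests is an inversion (ties $h_i=h_j$ count), so $M=\binom{m}{2}\approx 2\eta$. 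The per-request version of your claim fails for the same reason: a request with $h_t=y_t$ (zero error) can participate in $\Omega(n)$ inversions, namely against all $i$ with $y_i<y_t$ but $h_i\ge y_t$; the error witnessing such an inversion lives entirely at the \emph{other} endpoint, so any symmetric split of the charge is lost.

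The repair is to charge each inversion to \emph{one} endpoint, chosen by where the error actually sits. For an inversion $(i,j)$ with $y_i<y_j$ and $h_i\ge h_j$ one has
\[
\max(h_i-y_i,0)+\max(y_j-h_j,0)\;\ge\;(h_i-y_i)+(y_j-h_j)\;\ge\;y_j-y_i,
\]
so at least one of the two terms is $\ge(y_j-y_i)/2$; charge the inversion to that endpoint. The inversions charged to $i$ via its ``upward'' error force $y_j\in(y_i,\,y_i+2\max(h_i-y_i,0)]$, and those charged to $j$ via its ``downward'' error force $y_i\in[y_j-2\max(y_j-h_j,0),\,y_j)$; since the $y$-values of distinct requests are distinct integers (if $y_i=y_j$ then $\sigma_i=\sigma_j=\sigma_{y_i}$ and consecutive requests of the same page have different next arrivals---the only collision is the $y=n+1$ convention, which creates no inversions among the colliding indices and needs a one-line separate treatment), each request $t$ receives at most $2\max(h_t-y_t,0)+2\max(y_t-h_t,0)=2|h_t-y_t|$ charges, giving $M\le 2\eta$. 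So your instinct about the structure of the argument (inversions localized in a window of width proportional to the displacement) is right, but the asymmetric charging and the split of $|h_t-y_t|$ into its positive and negative parts are exactly the ``precise bookkeeping'' you deferred, and without them the proof does not close.
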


With this lemma, it suffices (up to a factor of $2$) to give our competitive ratio upper bounds in terms of the number of inversions $M$.

\section{A First Analysis of \Blind{}}

In this section, we give a first analysis of \Blind{}, showing that it gets very good performance when the ratio $\eps = \eta / \mathsf{OPT}$ is very small. In particular, our analysis shows that as $\eps\to 0$, the competitive ratio achieved approaches $1$.

Let $\mathcal A$ be the offline optimal algorithm (i.e., such that $\mathsf{ALG}_{\cA} = \OPT$). Let $\mathcal B$ be \Blind{}. Note that we can think of each of $\ALG_\cA$, $\ALG_\cB$, and $M$ as functions of the time $t$, i.e., they are the cost of $\cA$, the cost of $\cB$, and the number of inversions, respectively, on the prefix consisting of the first $t - 1$ requests.\footnote{This indexing is to be consistent with the definitions of $A_t$ and $B_t$.} We use the $\Delta$ operator to denote the change (in a function of $t$) from time $t$ to time $t+1$. For example, $\Delta\ALG_{\cA} = 1$ if $\ALG_\cA$ evicts an element upon the $t$-th request.

In our analysis, we maintain a matching $X_t$ between $A_t$ and $B_t$ at all times $t$. Call a matching \emph{valid} if it consists only of pairs $(a, b)\in A_t\times B_t$ such that the next arrival of $b$ is no later than the next arrival of $a$. Indeed, our matching $X_t\subseteq A_t\times B_t$ will be valid throughout the execution of the algorithm.

We now proceed with a potential function analysis, taking our potential $\Phi$ (as a function of $A_t$, $B_t$, and $X_t$) to be the number of unmatched pages in $B_t$. For notational simplicity, we will simply denote $\Phi(A_t, B_t, X_t)$ by $\Phi(t)$. Given this setup, we show:

\begin{proposition}\label{proposition:first}
There exists a valid matching $X_n$ such that $$
\mathsf{ALG}_{\mathcal B} + \Phi(n) \le \mathsf{OPT} + M.
$$
\end{proposition}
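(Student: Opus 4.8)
The plan is to prove this by an amortized, potential-function argument. I would construct the valid matchings $X_t$ inductively and verify, at each request, the local inequality $\Delta\ALG_{\cB}+\Delta\Phi\le\Delta\OPT+\Delta M$. Since $\ALG_{\cB}$, $\Phi$, $\OPT$, and $M$ all vanish before the first request, summing these local inequalities over the $n$ requests gives $\ALG_{\cB}+\Phi(n)\le\OPT+M$, with $X_n$ the matching reached at the end. So the whole argument reduces to specifying how to update $X_t$ to a valid $X_{t+1}$ for which the local inequality holds.

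Fix a request $t$ for page $p=\sigma_t$. Each of $\cA$ and $\cB$ performs one evict-and-insert: $\cA$ removes some $\alpha_t\in A_t$ and inserts $t$, and $\cB$ removes some $\beta_t\in B_t$ and inserts $t$; the removal is free precisely when the removed index already held $p$, and when $\cB$ genuinely evicts we have $h_{\beta_t}=\max_{b\in B_t}h_b$ by definition of \Blind{}. To build $X_{t+1}$, I would (i) delete from $X_t$ every pair incident to $\alpha_t$ or $\beta_t$ --- at most two pairs, leaving at most one orphaned $A$-page $a'$ (the old partner of $\beta_t$) and one orphaned $B$-page $b'$ (the old partner of $\alpha_t$); (ii) add the pair $(t,t)$, which is valid since both copies have next arrival $y_t$; and (iii) if orphans remain, add a further pair among $\{a',t\}\times\{b',t\}$ whenever one is valid, keeping the matching as large as possible. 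All surviving pairs stay valid because no cached page other than $p$ changes its next-arrival time over the step, and step (iii) is governed by transitivity of $\le$ on next-arrival times: from the deleted pairs $y_{b'}\le y_{\alpha_t}$ and $y_{\beta_t}\le y_{a'}$, so $(a',b')$ is valid once $y_{\alpha_t}\le y_{\beta_t}$, and $(a',t)$, $(t,b')$ are jointly valid once $y_{b'}\le y_t\le y_{a'}$.

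I would then run the case analysis on hit/miss of $p$ for each algorithm. If $\cB$ hits, then $\Delta\ALG_{\cB}=0$, and since $\Delta\Phi\le 1$ always (we delete at most two pairs and re-add at least $(t,t)$), the local inequality is immediate unless $\cA$ also hits; in the hit-hit case both $\alpha_t,\beta_t$ are the $p$-index, so $y_{\alpha_t}=y_{\beta_t}=t$ and the transitivity observation reconnects any orphans, giving $\Delta\Phi\le 0\le\Delta M$. If $\cB$ misses but $\cA$ hits, I would first record that the $p$-index $\alpha_t$ of $A_t$ is unmatched in $X_t$: a partner $b'$ would satisfy $y_{b'}\le y_{\alpha_t}=t$ by validity while $b'\in B_t$ forces $y_{b'}\ge t$, so $y_{b'}=t$ and $b'$ holds $p$, contradicting $p\notin B_t$. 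Hence deleting pairs at $\alpha_t$ removes nothing, $\Delta\Phi\le 0$, and the inequality holds at once if $\beta_t$ is unmatched ($\Delta\Phi=-1$). This leaves two crux configurations: $\cB$ misses, $\cA$ hits, and $\beta_t$ is matched (forcing $\Delta\Phi=0$); and $\cB$ and $\cA$ both miss with orphans $a',b'$ admitting no valid reconnection. In each I must produce a fresh inversion, one incident to the new index $t$, so that $\Delta M\ge 1$ covers the unit increase.

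The hard part is this last inversion-accounting step. The idea is to combine \Blind{}'s eviction rule --- it evicts the cached page of latest predicted arrival --- with the fact that validity of $X_t$ records the true arrival order. I expect the obstruction forcing $\Delta\Phi=0$ (the unmatched $p$-index $\alpha_t$ with $y_{\alpha_t}=t$, or the failure of every valid reconnection of $a'$ and $b'$) to pin down some index $i<t$ --- the natural candidates are $\beta_t$ itself and $\alpha_t$ --- with $h_i\ge h_t$ and $y_i<y_t$, or else the reverse pair $h_t\ge h_i$ and $y_t<y_i$; either way $(i,t)$ or $(t,i)$ is an inversion absent from the first $t-1$ requests, so $\Delta M\ge 1$. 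Extracting the two defining inequalities of an inversion from the eviction rule and the validity invariant, and choosing the index $i$ to blame in each sub-case, is where I expect the real work to lie; the remaining cases are routine given the update recipe.
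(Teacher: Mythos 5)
Your framework --- induction on $t$, the potential $\Phi$ as the number of unmatched pages of $B_t$, the local inequality $\Delta\ALG_\cB+\Delta\Phi\le\Delta\OPT+\Delta M$, and the disposal of the easy hit/miss cases --- matches the paper's proof, and those parts are correct. But the crux step, which you yourself flag as ``where the real work lies,'' is set up in a way that cannot be completed. You commit to producing, in each tight case, an inversion \emph{incident to the new index $t$} (a pair $(i,t)$ or $(t,i)$ with $i\in\{\alpha_t,\beta_t\}$), on the grounds that only such pairs are ``absent from the first $t-1$ requests.'' No such inversion need exist: the prediction $h_t$ for the next arrival of $\sigma_t$ can be exactly correct, and the reason $\cB$ lacks the page now (or evicts the wrong page now) has nothing to do with the quality of the prediction attached to the \emph{current} request. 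Concretely, in the case where $\cA$ hits, $\cB$ misses, and $\cB$'s evicted page $\beta_t$ is matched, the inversion you need is between $\beta_t$ and an unmatched page $b'$ still in $B_t$ that truly arrives later than $\beta_t$: the eviction rule gives $h_{\beta_t}\ge h_{b'}$, and the failure to rematch $\beta_t$'s orphaned partner to any unmatched page gives $y_{\beta_t}<y_{b'}$. Both indices are old, so this pair contributes nothing to your prefix-count $\Delta M$.

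The fix is to read the right-hand side as a charging argument rather than a literal prefix count: each unit of cost in a tight case is charged to an inversion $(\beta_t,b')$ with both endpoints in $B_t$, and no inversion is charged twice because one of its endpoints ($\beta_t$) is evicted at the moment of charging and that request index never re-enters the cache. The same repair is needed in your both-miss case: your reconnection dichotomy on $y_{\alpha_t}$ versus $y_{\beta_t}$ does not yield an inversion when it fails; the correct dichotomy is on $y_{\beta_t}$ versus $y_{b'}$, where $b'$ is the orphaned partner of $\cA$'s evicted page --- either $(a',b')$ is a valid pair, or $(\beta_t,b')$ is an inversion (again using $h_{\beta_t}\ge h_{b'}$ from the eviction rule). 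Without replacing ``inversion incident to $t$'' by this charging scheme on pairs inside $B_t$, the tight cases do not close and the proof fails.
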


\begin{proof}
We induct on the length $n$ of the input and perform a case analysis to show that we can maintain a valid matching $X_t$ such that at each time step, the right-hand side increases at least as much as the left-hand side, i.e., $\Delta\ALG_\cB + \Delta\Phi\le\Delta\OPT + \Delta M$.

For our base case, note that $A_1 = B_1$, so we may take $X_1$ to be the identity matching.

Now, upon a request at time $t$, we update $X_t$ according to the following cases (and with the consequences listed for each case):
  \begin{enumerate}
    \item
      The requested page $p$ is in both $A_t$ and $B_t$.
      \begin{enumerate}
        \item The cached pages are matched to each other.
          \begin{itemize}
            \item Do nothing.
          \end{itemize}
        \item Otherwise:
          \begin{enumerate}
            \item Both cached pages are matched.
              \begin{itemize}
                \item Remove the pairs $(c, p)$ and $(p, d)$ from $X_t$.
                \item Add the pairs $(p, p)$ and $(c, d)$ to $X_t$.
                \item As a result:
                  \begin{itemize}
                    \item
                      $\Delta\Phi = 0$.
                  \end{itemize}
              \end{itemize}
            \item Otherwise:
              \begin{itemize}
                \item Remove any pairs involving $p$ from $X_t$. (There is at most one such pair.)
                \item Add the pair $(p, p)$ to $X_t$.
                \item As a result:
                  \begin{itemize}
                    \item
                      $\Delta\Phi\le 0$.
                  \end{itemize}
              \end{itemize}
          \end{enumerate}
      \end{enumerate}
        \item
          The requested page $p$ is in $B_t$ only.
          \begin{itemize}
            \item Remove any pairs involving the evicted page $a$ from $X_t$. (There is at most one such pair.)
            \item Remove any pairs involving the requested page $p$ from $X_t$. (There is at most one such pair.)
            \item Add the pair $(p, p)$ to $X_t$.
            \item As a result:
              \begin{itemize}
                \item
                  $\Delta\OPT = 1$.
                \item
                  $\Delta\Phi\le 1$.
              \end{itemize}
          \end{itemize}
    \item The requested page $p$ is in $A_t$ only.
      \begin{enumerate}
        \item The evicted page $b\in B_t$ is unmatched.
          \begin{itemize}
            \item Add the pair $(p, p)$ to $X_t$. (The arriving page $p\in A_t$ cannot be in any valid matching.)
            \item As a result:
              \begin{itemize}
                \item
                  $\Delta\ALG_\cB = 1$.
                \item
                  $\Delta\Phi = -1$.
              \end{itemize}
          \end{itemize}
        \item The evicted page $b\in B_t$ is matched.
          \begin{enumerate}
            \item $b$ arrives later than all unmatched pages in $B_t$.
              \begin{itemize}
                \item Remove the pair $(c, b)$ involving the evicted page $b$ from $X_t$.
                \item Add the pair $(c, b')$ to $X_t$, where $b'\in B_t$ is any unmatched page.
                \item Add the pair $(p, p)$ to $X_t$. (The arriving page $p\in A_t$ cannot be in any valid matching.)
                \item As a result:
                  \begin{itemize}
                    \item
                      $\Delta\ALG_\cB = 1$.
                    \item
                      $\Delta\Phi = -1$.
                  \end{itemize}
              \end{itemize}
            \item There is an unmatched page $b'\in B_t$ arriving later than $b$.
              \begin{itemize}
                \item Remove the pair $(c, b)$ involving the evicted page $b$ from $X_t$.
                \item Add the pair $(p, p)$ to $X_t$. (The arriving page $p\in A_t$ cannot be in any valid matching.)
                \item As a result:
                  \begin{itemize}
                    \item
                      $\Delta\ALG_\cB = 1$.
                    \item
                      $\Delta\Phi = 0$.
                    \item
                      $\Delta M = 1$, as there is an inversion between $b$ and $b'$. (Note that we do not count this inversion ever again, as $b$ gets evicted.)
                  \end{itemize}
              \end{itemize}
          \end{enumerate}
      \end{enumerate}
    \item The requested page $p$ is in neither $A_t$ nor $B_t$.
      \begin{enumerate}
        \item $\mathcal A$ evicts an unmatched page $a\in A_t$.
          \begin{enumerate}
            \item $\mathcal B$ evicts an unmatched page $b\in B_t$.
              \begin{itemize}
                \item Add the pair $(p, p)$ to $X_t$.
                \item As a result:
                  \begin{itemize}
                    \item
                      $\Delta\OPT = 1$.
                    \item
                      $\Delta\ALG_\cB = 1$.
                    \item
                      $\Delta\Phi = 1$.
                  \end{itemize}
              \end{itemize}
            \item $\mathcal B$ evicts a matched page $b\in B_t$.
              \begin{itemize}
                \item Remove the pair $(c, b)$ involving $b$ from $X_t$.
                \item Add the pair $(p, p)$ to $X_t$.
                \item As a result:
                  \begin{itemize}
                    \item
                      $\Delta\OPT = 1$.
                    \item
                      $\Delta\ALG_\cB = 1$.
                  \end{itemize}
              \end{itemize}
          \end{enumerate}
        \item $\mathcal A$ evicts a matched page $a\in A_t$.
          \begin{enumerate}
            \item $\mathcal B$ evicts an unmatched page $b\in B_t$.
              \begin{itemize}
                \item Remove the pair $(a, d)$ involving $a$ from $X_t$.
                \item Add the pair $(p, p)$ to $X_t$.
                \item As a result:
                  \begin{itemize}
                    \item
                      $\Delta\OPT = 1$.
                    \item
                      $\Delta\ALG_\cB = 1$.
                  \end{itemize}
              \end{itemize}
            \item $\mathcal B$ evicts a matched page $b\in B_t$.
              \begin{itemize}
                \item Remove the pair $(a, d)$ involving $a$ from $X_t$.
                \item Remove the pair $(c, b)$ involving $b$ from $X_t$.
                \item Add the pair $(p, p)$ to $X_t$.
                \item As a result:
                  \begin{itemize}
                    \item
                      $\Delta\OPT = 1$.
                    \item
                      $\Delta\ALG_\cB = 1$.
                    \item Note that either $b$ arrives after $d$, in which case we can add $(c, d)$ to $X_t$ and $\Delta\Phi = 0$, or the pair $(b, d)$ forms an inversion, in which case $\Delta\Phi = 1$ and $\Delta M = 1$. (As before, since $b$ is getting evicted, we will not count this pair twice.)
                  \end{itemize}
              \end{itemize}
          \end{enumerate}
      \end{enumerate}
  \end{enumerate}
It is not hard to verify that the change in the left-hand side of the bound is no more than the change in the right-hand side in each of the cases listed above, from which the proposition follows.
\end{proof}

\begin{proposition}\label{prop:one}
  The competitive ratio of algorithm $\mathcal B$ is at most $1 + 2\eps$.
\end{proposition}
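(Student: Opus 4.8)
The plan is to combine Proposition \ref{proposition:first} with Lemma \ref{lemma:inv} and a nonnegativity observation about the potential. First, recall that by construction $\Phi(n)$ counts unmatched pages in $B_n$, so $\Phi(n) \ge 0$. Hence Proposition \ref{proposition:first} immediately yields
\[
  \mathsf{ALG}_{\mathcal B} \le \mathsf{ALG}_{\mathcal B} + \Phi(n) \le \mathsf{OPT} + M.
\]
Next, I would invoke Lemma \ref{lemma:inv}, which says $\eta \ge \frac12 M$, i.e.\ $M \le 2\eta$. Substituting gives $\mathsf{ALG}_{\mathcal B} \le \mathsf{OPT} + 2\eta$. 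Dividing through by $\mathsf{OPT}$ and recalling the notation $\eps = \eta/\mathsf{OPT}$ from \Cref{sec:setup}, this reads $\mathsf{ALG}_{\mathcal B}/\mathsf{OPT} \le 1 + 2\eps$, which is exactly the claimed competitive ratio (with the additive $O(1)$ slack absorbed, as is standard — indeed here the bound is even additively clean).

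One subtlety worth addressing explicitly: Proposition \ref{proposition:first} was stated and proved with $\mathcal A$ taken to be the offline optimal algorithm, so $\mathsf{ALG}_{\mathcal A} = \mathsf{OPT}$; I would restate that identification at the top of the proof so the reader sees why $\mathsf{OPT}$ appears. I should also note that $\mathcal B$ is \Blind{}, so the statement about "$\mathcal B$'s competitive ratio" is precisely the statement about \Blind{}. There is essentially no obstacle here — the real work was already done in Proposition \ref{proposition:first}'s case analysis. The only thing to be careful about is not to double-count the factor of $2$: Lemma \ref{lemma:inv} converts the $M$ on the right-hand side into $2\eta$, and that single factor of $2$ is the source of the $2\eps$ in the final bound. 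So the proof is just: (i) $\Phi(n)\ge 0$; (ii) apply Proposition \ref{proposition:first}; (iii) apply Lemma \ref{lemma:inv}; (iv) divide by $\mathsf{OPT}$.

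If one wanted to be pedantic about the additive constant in the definition of competitive ratio, one could observe that Proposition \ref{proposition:first} has no additive error term at all, so in fact $\mathsf{ALG}_{\mathcal B} \le (1 + 2\eps)\,\mathsf{OPT}$ holds exactly, which is even stronger than the asymptotic statement; I would mention this in a sentence. This also foreshadows \Cref{theorem:main}, whose first branch is exactly this $1 + 2\eps$ bound — the second branch $2 + \frac{4}{k-1}\eps$ presumably comes from a separate, more refined analysis carried out later.
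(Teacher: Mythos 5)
Your proposal is correct and matches the paper's proof exactly: both drop the nonnegative potential $\Phi(n)$ from \Cref{proposition:first} to get $\ALG_{\cB}\le\OPT+M$, then apply \Cref{lemma:inv} and divide by $\OPT$. Your version is in fact slightly more careful than the paper's, which leaves the step $\Phi(n)\ge 0$ implicit (and contains a typo, writing $\ALG_{\cA}$ where $\ALG_{\cB}$ is meant).
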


\begin{proof}
  Note that $2\eta$ is bounded below by the number of inversions $M$ of $(\sigma, h)$ by \Cref{lemma:inv}. By \Cref{proposition:first}, $\mathsf{ALG}_{\mathcal A}\le\mathsf{OPT} + M$, so $\mathsf{ALG}_{\mathcal A} / \mathsf{OPT}\le 1 + M / \mathsf{OPT} \le 1 + 2\eps$.
\end{proof}

\section{A More Careful Analysis}

In this section, we give an asymptotically better (in $k$) bound for the performance of \Blind{}. A more careful analysis is needed to show an upper bound with a $1/k$ coefficient on the ratio $\eps = \eta / \mathsf{OPT}$. We use the same high-level approach for the proof as before, but with a more complicated potential function. Again, $\mathcal A$ is the offline optimal algorithm and $\mathcal B$ is the \Blind{} algorithm, and also as before, we use $\Delta$ to denote change (in functions of $t$) from request $t$ to request $t+1$.

We maintain in this proof a matching $X_t$ over pairs of page requests $(a, b)\in A_t\times B_t$ such that $h_a\ge h_b$ for each time step $t$. Our potential function $\Phi$ will be a function of $A_t$, $B_t$, and $X_t$. For notational simplicity, we will simply denote $\Phi(A_t, B_t, X_t)$ by $\Phi(t)$.

Given $A_t$, $B_t$, and $X_t$ at time $t$, define $\Phi_{0}(t)$ to be the number of $b\in B_t$ that are unmatched. Define $\Phi_1(t)$ to be the number of $b\in B_t$ such that $(b, b)\not\in X_t$. In other words, $\Phi_1$ counts how many page requests in $B_t$ are not matched to the \emph{same} page request in $A_t$. Let $z_a(t)$ be the number of pages in $B_t$ predicted to appear no later than $h_a$, with tie-breaking done in a consistent manner (e.g., by the last time the page was requested). Next, define $$
\Phi_2(t) = \sum_{(a, b)\in X_t} (z_b(t) - z_a(t)) = \sum_{(a, b)\in X_t} \paren[\big]{\phi^A_a(t) + \phi^B_b(t)},
$$ where $\phi^A_a(t) = (k - 1) - z_a(t)$ and $\phi^B_b(t) = z_b(t) - (k - 1)$. Finally, we take $$
    \Phi(t) = (k-1)\Phi_{0}(t) + (k-1)\Phi_1(t) + \Phi_2(t)
$$ as our overall potential function.

\begin{proposition}\label{proposition:second}
  For any input $(\sigma, h)$, there exists a matching $X_n\subseteq A_n\times B_n$ consisting only of pairs $(a, b)$ satisfying $h_a\ge h_b$ such that $$
    (k - 1)\mathsf{ALG}_{\mathcal B} + \Phi(n)\le 2(k-1)\mathsf{OPT} + 2M.
    $$
\end{proposition}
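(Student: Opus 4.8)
The plan is to mimic the proof of \Cref{proposition:first}: induct on $n$, maintaining a matching $X_t$ with $h_a \ge h_b$ on all its pairs, and verify at every request that $(k-1)\Delta\ALG_\cB + \Delta\Phi \le 2(k-1)\Delta\OPT + 2\Delta M$. The base case is again $A_1 = B_1$ with $X_1$ the identity, so $\Phi(1) = 0$. The inductive step is the same case analysis as before — on whether the requested page $p$ is in both, one, or neither cache, and on whether the pages $\cA$ and $\cB$ evict (if any) are matched or unmatched — but now I must control three sub-potentials. Since $\Phi_0$ and $\Phi_1$ carry weight $k-1$, a single eviction or mismatch changes them by at most $k-1$; the delicate term is $\Phi_2 = \sum_{(a,b)\in X_t}(z_b(t) - z_a(t))$, which can shift even when the matching is untouched, because $z_a(t)$ and $z_b(t)$ count how many pages of $B_t$ have predicted arrival $\le h_a$ (resp.\ $\le h_b$), and these counts move whenever $B_t$ changes.

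The key observations I would isolate up front. First, each $z$-value lies in $\{0,1,\dots,k\}$ (or $\{1,\dots,k\}$ after the requested page is inserted), so each summand $\phi^A_a + \phi^B_b = (z_b - z_a)$ of $\Phi_2$ lies in a bounded range; in particular dropping or adding one pair to $X_t$ changes $\Phi_2$ by at most $k-1$ in absolute value, and this is exactly why the $\Phi_0,\Phi_1$ terms are scaled by $k-1$: a pair that becomes unmatched pays $(k-1)$ into $(k-1)\Phi_0$, which dominates any $\Phi_2$ gain from losing it. Second, when $\cB$ evicts a page $b$, it evicts the one with the largest prediction $h_b$; so for any other $b'$ still in $B_t$ we have $h_{b'} \le h_b$, which is what makes the newly-formed pair $(c,b')$ — reconnecting $b$'s old partner $c$ to a leftover unmatched page — satisfy $h_c \ge h_b \ge h_{b'}$ and hence keeps the matching valid. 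Third, the amortized identity $z_b - z_a \ge 0$ on matched pairs (since $h_a \ge h_b$ forces $z_a \ge z_b$ up to tie-breaking... here one must be careful: actually $h_a \ge h_b$ gives $z_a \ge z_b$, so $\Phi_2 \le 0$; I should double-check the sign convention, but the point is $\Phi_2$ is bounded and its per-step change is what we track) lets us book a change of exactly $1$ to $M$ whenever an inversion is "used up" by an eviction.

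Then I would walk through the cases. The free cases (requested page $p$ in both caches) only rearrange $X_t$: swapping $(c,p),(p,d)$ for $(p,p),(c,d)$, or attaching $(p,p)$, which can only help because $p$ gets matched to itself, lowering $\Phi_0$ and $\Phi_1$; one checks $\Delta\Phi_2$ is nonpositive or absorbed. The case $p \in B_t$ only gives $\Delta\OPT = 1$, and the right side gains $2(k-1)$, against a left-side gain of at most $(k-1)\cdot(\text{change in }\Phi_0) + \dots$ which is $\le 2(k-1)$ once we account that $p$ now becomes matched. The case $p \in A_t$ only is where $\Delta\ALG_\cB = 1$ costs $(k-1)$ on the left, and we need $\Phi$ to drop by $k-1$, or an inversion to appear ($\Delta M = 1$ buys $2$ on the right, and we exploit that the page $b'$ with $h_{b'} < h_b$ — wait, we need $h_b \le h_{b'}$ for the inversion $h_b$ vs prediction, matching the $b$-arrives-later subcase): this is the subcase distinction "$b$ arrives later than all unmatched pages" versus not, transplanted to predictions, and I expect the bookkeeping of $\Phi_2$ here to be the fussiest, because evicting $b$ removes all of $b$'s contribution to everyone else's $z$-count. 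The case $p$ in neither cache splits on matched/unmatched for both $\cA$'s and $\cB$'s evicted pages exactly as in the first proof, with the new wrinkle that removing $b$ from $B_t$ decrements $z_{a'}(t)$ by one for every $a'$ with $h_{a'} \ge h_b$ that is still matched — so $\Phi_2$ changes by $-(\text{number of such }a')$, which is at most $k-1$ in magnitude and, crucially, has a sign that works in our favor when $\cB$ pays for an eviction. I would organize this as a table of $(\Delta\ALG_\cB, \Delta\OPT, \Delta\Phi_0, \Delta\Phi_1, \Delta\Phi_2, \Delta M)$ per case and verify the inequality line by line.

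The main obstacle will be the $\Phi_2$ accounting in the cases where $\cB$ evicts a matched page: I need to argue simultaneously that (i) reattaching the orphaned partner $c$ keeps $h_c$ above the new partner's prediction, (ii) the net change in $\Phi_2$ from both the matching surgery and the shift in all the $z$-values (from $B_t$ losing $b$ and gaining $p$) is bounded by what the right-hand side affords, and (iii) when no reattachment is possible we can instead charge an inversion to $M$ — and that this inversion is genuinely new (never double-counted), which hinges on $b$ being evicted and thus leaving the instance, exactly as flagged in the first proof's remarks. Getting the constant right — that everything fits under $2(k-1)\Delta\OPT + 2\Delta M$ rather than, say, needing a factor $3$ — is the quantitative heart of the argument, and I expect it forces the precise choice of the "$k-1$" offset inside $\phi^A_a$ and $\phi^B_b$ (centering $z$ at $k-1$) so that unmatched-but-present pages contribute $0$ to $\Phi_2$ while the weighting against $\Phi_0,\Phi_1$ stays tight.
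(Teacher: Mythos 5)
Your proposal is a roadmap rather than a proof: the per-step inequality $(k-1)\Delta\ALG_{\cB} + \Delta\Phi \le 2(k-1)\Delta\OPT + 2\Delta M$ is never actually verified in any case (you explicitly defer the case table), and the places where you gesture at the mechanism contain unresolved sign confusions. The most important missing idea is where the budget comes from in the cases where $\cB$ evicts but $\cA$ does not (requested page in $A_t$ only): there $\Delta\OPT = 0$, so the cost $(k-1)\Delta\ALG_{\cB} = k-1$ must be paid entirely by a potential drop or by inversions. The paper's proof makes this work by splitting each request into a \emph{matching} phase and an \emph{updating} phase and observing that the updating phase (inserting the new pair $(t,t)$) \emph{always} decreases both $\Phi_0$ and $\Phi_1$ by one, yielding a guaranteed credit of $2(k-1)$ per request that pays for the matching-phase costs; this is exactly what closes the tight subcases, e.g.\ $\Delta\Phi_2 = z_p - (k-1)$ against $\Delta M \ge z_p$ plus the updating credit. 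Your sketch says ``we need $\Phi$ to drop by $k-1$, or an inversion to appear'' without exhibiting either, and in the tight cases both sources are needed simultaneously.

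The second gap is the inversion accounting, which you leave unpinned (you waver between $h_{b'} < h_b$ and $h_b \le h_{b'}$, and between $z_b - z_a \ge 0$ and $\le 0$). The inversions that actually pay in the $p\in A_t$-only cases are the $z_p$ pairs $(p, b')$ with $b'\in B_t$ and $h_p \ge h_{b'}$ --- these are inversions because $\sigma_p$ arrives now while $b'$ remains cached, so $y_p < y_{b'}$ --- and in the updating phase one needs $\Delta\Phi_2 \le \Delta M$, which holds because $z_b - z_a$ can increase only when some $p\in B_t$ with $\sigma_p = \sigma_t$ satisfies $h_b < h_p \le h_a$, itself an inversion $(p,b)$ counted only once since $p$ then leaves $B_t$. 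Without these identifications the constant $2$ in the bound --- which you correctly flag as the quantitative heart of the argument --- is not established.
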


\begin{proof}
  We again induct on the length $n$ of the input, and we again perform a case analysis to show that we can maintain a matching $X_t$ consisting only of pairs $(a, b)$ satisfying $h_a\ge h_b$ such that at each time step, the right-hand side increases at least as much as the left-hand side.

For our analysis, we split the serving of each page request into two phases:
  \begin{enumerate}
    \item \textbf{Matching.} Update $X_t$ so that the page requests in $A_t$ and $B_t$ that are to be removed are unmatched. (Note that page requests are removed either because the corresponding page was requested again or because the corresponding page was evicted.)
    \item \textbf{Updating.} Replace a page request from each of $A_t$ and $B_t$ with the new request and insert the new page request pair $(t, t)$ into $X_t$.
  \end{enumerate}

  We first analyze how updating affects the potential $\Phi$. This operation always decreases $\Phi_{0}$ and $\Phi_1$ each by $1$, since we remove an unmatched pair. Next, for $\Phi_2$, observe that for a matched pair $(a, b)$, the difference $z_b - z_a$ increases on the $t$-th request only if there exists a $p\in B_t$ such that $\sigma_p = \sigma_t$ and $h_b < h_p \le h_a$. In this case, the pair $(p, b)$ also forms an inversion. Any inversion $(p, b)$ is counted at most once this way because $p$ is evicted from $b$. Thus, we have $\Delta\Phi_2\le\Delta M$.

We now analyze the matching phase with a case analysis:
\begin{enumerate}
\item The requested page is in both $A_t$ and $B_t$.
  \begin{itemize}
    \item The previous page requests for $\sigma_t$ in $A_t$ and $B_t$ are matched to each other, so we can just unmatch them.
    \item As a result:
      \begin{itemize}
        \item $\Delta\Phi_{0} = 1$.
        \item $\Delta\Phi_1 = 1$.
        \item $\Delta\Phi_2 = 0$, since the pages were matched to each other.
      \end{itemize}
  \end{itemize}
\item The requested page is in $A_t$ only.
  \begin{enumerate}
    \item The previous request $p\in A_t$ for the requested page is matched as $(p, d)$ and the page request $b\in B_t$ evicted by $\mathcal B$ is matched as $(c, b)$.
      \begin{itemize}
        \item Unmatch $(p, d)$ and $(c, b)$ and then match $(c, d)$. The latter is okay since $h_c\ge h_b\ge h_d$. Note that $p\neq d$.
        \item As a result:
          \begin{itemize}
            \item $\Delta\mathsf{ALG}_{\mathcal B} = 1$.
            \item $\Delta\Phi_{0} = 1$.
            \item $\Delta\Phi_1\le 1$.
            \item $\Delta\Phi_2 = z_p - (k - 1)$, since $\phi^B_b = 0$ and $\phi^A_p = (k - 1) - z_p$.
            \item $\Delta M\ge z_p$, since the arrival of $\sigma_p$ also generates $z_p$ inversions of the form $(p, b')$ for all $b'\in B_t$ such that $h_{p}\ge h_{b'}$.
          \end{itemize}
      \end{itemize}
    \item The previous request $p\in A_t$ for the requested page is matched as $(p, d)$ and the page request $b\in B_t$ evicted by $\mathcal B$ is unmatched.
      \begin{itemize}
        \item Unmatch $(p, d)$. Note that $p\neq d$.
        \item As a result:
          \begin{itemize}
            \item $\Delta\mathsf{ALG}_{\mathcal B} = 1$.
            \item $\Delta\Phi_{0} = 1$.
            \item $\Delta\Phi_1 = 0$.
            \item $\Delta\Phi_2\le z_p$, since $\phi^A_p = (k - 1) - z_p$ and $\phi^B_d = z_d - (k - 1)\ge -(k-1)$.
            \item $\Delta M\ge z_p$, since the arrival of $\sigma_p$ also generates $z_p$ inversions of the form $(p, b')$ for all $b'\in B_t$ such that $h_{p}\ge h_{b'}$.
          \end{itemize}
      \end{itemize}
    \item The previous request $p\in A_t$ for the requested page is unmatched and the page request $b\in B_t$ evicted by $\mathcal B$ is matched as $(c, b)$.
      \begin{itemize}
        \item Unmatch $(c, b)$ and match $(c, d)$ for an arbitrary unmatched $d\in B_t\setminus\{b\}$. Doing so is okay because $h_c\ge h_b\ge h_d$.
        \item As a result:
          \begin{itemize}
            \item $\Delta\mathsf{ALG}_{\mathcal B} = 1$.
            \item $\Delta\Phi_{0} = 0$.
            \item $\Delta\Phi_1\le 1$.
            \item $\Delta\Phi_2\le 0$, since $\phi^B_b = 0$ and $-\phi^B_d = (k - 1) - z_d\ge 0$.
          \end{itemize}
      \end{itemize}
    \item The previous request $p\in A_t$ for the requested page is unmatched and the page request $b\in B_t$ evicted by $\mathcal B$ is unmatched.
      \begin{itemize}
        \item Do nothing.
        \item As a result:
          \begin{itemize}
            \item $\Delta\mathsf{ALG}_{\mathcal B} = 1$.
          \end{itemize}
      \end{itemize}
  \end{enumerate}

\item The requested page is in $B_t$ only.
  \begin{enumerate}
    \item The previous request $p\in B_t$ for the requested page is matched as $(c, p)$ and the page request $a\in A_t$ evicted by $\mathcal A$ is matched as $(a, d)$.
      \begin{itemize}
        \item Unmatch $(c, p)$ and $(a, d)$. Note that $c\neq p$.
        \item As a result:
          \begin{itemize}
            \item $\Delta\mathsf{OPT} = 1$.
            \item $\Delta\Phi_{0} = 2$.
            \item If $a = d$, then $\Delta\Phi_1 = 1$ and $\phi^A_a + \phi^B_d = 0$; otherwise, $a\neq d$, in which case $\Delta\Phi_1 = 0$ and \smash{$\phi^A_a + \phi^B_d\ge -(k-1)$}.
            \item Moreover, $\phi^A_c + \phi^B_p\ge -(k-1)$.
          \end{itemize}
      \end{itemize}
    \item The previous request $p\in B_t$ for the requested page is matched as $(c, p)$ and the page request $a\in A_t$ evicted by $\mathcal A$ is unmatched.
      \begin{itemize}
        \item Unmatch $(c, p)$. Note that $c\neq p$.
        \item As a result:
          \begin{itemize}
            \item $\Delta\mathsf{OPT} = 1$.
            \item $\Delta\Phi_{0} = 1$.
            \item $\Delta\Phi_1 = 0$.
            \item $\Delta\Phi_2\le k-1$, since $\phi^A_c + \phi^B_p\ge -(k-1)$.
          \end{itemize}
      \end{itemize}
    \item The previous request $p\in B_t$ for the requested page is unmatched and the page request $a\in A_t$ evicted by $\mathcal A$ is matched as $(a, d)$.
      \begin{itemize}
        \item Unmatch $(a, d)$.
        \item As a result:
          \begin{itemize}
            \item $\Delta\mathsf{OPT} = 1$.
            \item $\Delta\Phi_{0} = 1$.
            \item If $a = d$, then $\Delta\Phi_1 = 1$ and $\phi^A_a + \phi^B_d = 0$; otherwise, $a\neq d$, in which case $\Delta\Phi_1 = 0$ and \smash{$\phi^A_a + \phi^B_d\ge -(k-1)$}.
          \end{itemize}
      \end{itemize}
    \item The previous request $p\in B_t$ for the requested page is unmatched and the page request $a\in A_t$ evicted by $\mathcal A$ is unmatched.
      \begin{itemize}
        \item Do nothing.
        \item As a result:
          \begin{itemize}
            \item $\Delta\mathsf{OPT} = 1$.
          \end{itemize}
      \end{itemize}
  \end{enumerate}
\item The requested page is in neither $A_t$ nor $B_t$.
  \begin{enumerate}
    \item The previous request $a\in A_t$ evicted by $\mathcal A$ is matched as $(a, d)$ and the page request $b\in B_t$ evicted by $\mathcal B$ is matched as $(c, b)$.
      \begin{itemize}
        \item Unmatch $(a, d)$ and $(c, b)$ and then match $(c, d)$. The latter is okay because $h_c\ge h_b\ge h_d$.
        \item As a result:
          \begin{itemize}
            \item $\Delta\mathsf{ALG}_{\mathcal B} = 1$.
            \item $\Delta\mathsf{OPT} = 1$.
            \item $\Delta\Phi_{0} = 1$.
            \item $\Delta\Phi_{1}\le 2$.
            \item $\Delta\Phi_2\le 0$, since $\phi^A_a\ge 0$ and $\phi^B_b = 0$.
          \end{itemize}
      \end{itemize}
    \item The previous request $a\in A_t$ evicted by $\mathcal A$ is matched as $(a, d)$ and the page request $b\in B_t$ evicted by $\mathcal B$ is unmatched.
      \begin{itemize}
        \item Unmatch $(a, d)$.
        \item As a result:
          \begin{itemize}
            \item $\Delta\mathsf{ALG}_{\mathcal B} = 1$.
            \item $\Delta\mathsf{OPT} = 1$.
            \item $\Delta\Phi_{0} = 1$.
            \item If $a = d$, then $\Delta\Phi_1 = 1$ and $\phi^A_a + \phi^B_d = 0$; otherwise, $a\neq d$, in which case $\Delta\Phi_1 = 0$ and \smash{$\phi^A_a + \phi^B_d\ge -(k-1)$}.
          \end{itemize}
      \end{itemize}
    \item The previous request $a\in A_t$ evicted by $\mathcal A$ is unmatched and the page request $b\in B_t$ evicted by $\mathcal B$ is matched as $(c, b)$.
      \begin{itemize}
        \item Unmatch $(c, b)$ and match $(c, d)$.
        \item As a result:
          \begin{itemize}
            \item $\Delta\mathsf{ALG}_{\mathcal B} = 1$.
            \item $\Delta\mathsf{OPT} = 1$.
            \item $\Delta\Phi_{0} = 0$.
            \item $\Delta\Phi_1\le 1$.
            \item $\Delta\Phi_2\le 0$, since $\phi^B_b = 0$ and $-\phi^B_d = (k - 1) - z_d\ge 0$.
          \end{itemize}
      \end{itemize}
    \item The previous request $a\in A_t$ evicted by $\mathcal A$ is unmatched and the page request $b\in B_t$ evicted by $\mathcal B$ is unmatched.
      \begin{itemize}
        \item Do nothing.
        \item As a result:
          \begin{itemize}
            \item $\Delta\mathsf{ALG}_{\mathcal B} = 1$.
            \item $\Delta\mathsf{OPT} = 1$.
          \end{itemize}
      \end{itemize}
  \end{enumerate}
\end{enumerate}

Since $\Phi_1$ and $\Phi_2$ each decrease by $1$ in the updating phase, we have $2(k-1)$ in extra potential that we can use to pay for costs in the matching phase. Indeed, one can verify that this is sufficient for all of the cases described above---the tight cases are 2(a), 2(b), 2(c), 3(a), and 4(a). Thus, the proposition follows.
\end{proof}

\begin{proposition}\label{prop:two}
  The competitive ratio of algorithm $\mathcal B$ is at most $2 + 4\eps / (k-1)$.
\end{proposition}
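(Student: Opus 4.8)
The plan is to derive this from \Cref{proposition:second} in exactly the way \Cref{prop:one} was derived from \Cref{proposition:first}: drop the nonnegative potential from the left-hand side, rearrange, and convert inversions to $\ell_1$ loss via \Cref{lemma:inv}. The one new ingredient I would need is that $\Phi(n)\ge 0$. That $(k-1)\Phi_0(n)$ and $(k-1)\Phi_1(n)$ are nonnegative is clear since both are counts, so the work is in lower-bounding $\Phi_2(n)=\sum_{(a,b)\in X_n}(z_b(n)-z_a(n))$. I would split this sum according to whether $a=b$: pairs with $a=b$ contribute $0$, while a pair with $a\neq b$ contributes $z_b(n)-z_a(n)\ge 1-k=-(k-1)$, using that $b$ itself is a page of $B_n$ predicted to arrive at time $h_b$ (so $z_b(n)\ge 1$) whereas $z_a(n)\le\abs{B_n}\le k$. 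Since the number of pairs $(a,b)\in X_n$ with $a\neq b$ equals $\Phi_1(n)-\Phi_0(n)$ (these are precisely the elements of $B_n$ counted by $\Phi_1$ but not by $\Phi_0$), this gives $\Phi_2(n)\ge-(k-1)(\Phi_1(n)-\Phi_0(n))$, and hence $\Phi(n)\ge 2(k-1)\Phi_0(n)\ge 0$.

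Given this, I would finish as follows. \Cref{proposition:second} becomes $(k-1)\ALG_{\cB}\le 2(k-1)\OPT+2M$; dividing by $k-1$ (we may assume $k\ge 2$, since the claimed bound is otherwise undefined) and using $M\le 2\eta$ from \Cref{lemma:inv} yields $\ALG_{\cB}\le 2\OPT+\tfrac{4\eta}{k-1}$; dividing by $\OPT$ and recalling $\eps=\eta/\OPT$ gives the competitive ratio $2+\tfrac{4\eps}{k-1}$. No additive constant is incurred, because \Cref{proposition:second} has none (its base case is the identity matching at $t=1$, where $\Phi(1)=0$).

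The hard part is entirely upstream, in \Cref{proposition:second}; the present statement is a short corollary. Within this proof the only place needing any attention is the nonnegativity of $\Phi$, where one must exploit the telescoping ``$z_b-z_a$'' form of $\Phi_2$ together with the bookkeeping identity relating non-identity matched pairs to $\Phi_1-\Phi_0$. The remaining manipulations are routine and structurally identical to the passage from \Cref{proposition:first} to \Cref{prop:one}.
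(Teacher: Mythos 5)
Your proof is correct and follows the same route as the paper, which simply says to compose \Cref{proposition:second} with \Cref{lemma:inv}. Your verification that $\Phi(n)\ge 0$ (via $\Phi_2\ge -(k-1)(\Phi_1-\Phi_0)$, so $\Phi\ge 2(k-1)\Phi_0\ge 0$) is a detail the paper leaves implicit, and you have it right.
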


\begin{proof}
  Compose \Cref{proposition:second} with \Cref{lemma:inv}.
\end{proof}

\begin{remark*}
  This analysis of \Blind{} is tight in the constant term---we can make $\eps / (k - 1)$ arbitrarily small while having a competitive ratio of $2$. (However, for very small $\eps$, the bound of the previous section is better---in particular, if $\eps < \frac 12 + \frac 1{k-3}$.)
\end{remark*}

\section{Proofs of Upper Bounds}

We are now ready to prove the results stated in \Cref{sec:intro}:
\maintheorem*
\begin{proof}
  From the analysis of the previous two sections, the desired bound immediately follows from taking the minimum of the bounds in \Cref{prop:one,,prop:two}.
\end{proof}

\firstcorollary*
\begin{proof}
  Combine \Blind{} with \textsc{LRU} using the ``combiner'' from \Cref{theorem:det_comb}, with the performance of \Blind{} being bounded by \Cref{theorem:main}.
\end{proof}

\secondcorollary*

\begin{proof}
  Like in the proof above, combine \Blind{} with algorithm \textsc{Equitable} of \citet{ACN00}\footnote{We use \textsc{Equitable} because it achieves the optimal worst-case competitive ratio of $H_k$ for online caching; \M{} has a competitive ratio of $2H_k - 1$ \cite{ACN00}.}, this time using the ``combiner'' from \Cref{theorem:rand_comb}.
\end{proof}

\section{Deterministic Lower Bound}

We now show that combining \Blind{} with LRU gets an optimal competitive ratio bound (in terms of $\eta$, $\OPT$, and $k$) among all deterministic algorithms for learning-augmented online caching by proving \Cref{theorem:lower}:

\lowerboundtheorem*

\begin{proof}
  Let $\cA$ be any deterministic algorithm for learning-augmented online caching.

  We show there exists a family of inputs $(\sigma, h)$ with $\eps/k$ ranging from $0$ to $k$ and $\OPT$ arbitrarily large such $\ALG_\cA\ge\OPT + C\eta / k$, for some constant $C > 0$. That is, for $\eps/k$ ranging from $0$ to $k$, we will show that we can make this inequality hold as $\OPT\to\infty$ with $\eps/k$ fixed. Dividing through by $\OPT$ then implies the theorem.

  We now construct such inputs $(\sigma, h)$. First, fix $j < k$. Let $P_1,\ldots,P_k,Q_0$ be $k+1$ distinct pages. We make the following sequence of requests, which we call a \emph{phase}:
  \begin{enumerate}
    \item
      Repeat $k$ times the following:
      \begin{enumerate}
        \item
          Make requests to $P_1,\ldots,P_k$ in order, predicting each page to next appear $k$ requests from now \emph{except} during the last iteration, where we predict each page to next appear $k + j + 1$ requests from now.
      \end{enumerate}
    \item
      Make a request to $Q_0$ and predict that it will next appear $k^2 + j + 1$ pages from now.
    \item
      For $i = 1,\ldots,j$:
      \begin{enumerate}
        \item
          Request the page evicted by in $\cA$ during the previous request, if it exists. Otherwise, request an arbitrary page. For each page, provide the same prediction as the last time this page was requested.
      \end{enumerate}
  \end{enumerate}
  We repeat the above as many times as needed.

  In a single phase, observe that $\OPT$ makes at most two evictions---once to evict $Q_0$ and once upon the arrival of $Q_0$. On the other hand, I claim $\cA$ makes at least $j+1$ evictions. First, if the cache of $\cA$ after (1) does not consist of $P_1,\ldots,P_k$, then $\cA$ must have incurred cost at least $k\ge j + 1$ during (1). Thus, we may assume that $\cA$'s cache consists of $P_1,\ldots,P_k$ after (1). If so, $\cA$ has to evict a page for each of the remaining $j+1$ requests in the phase, as the arrival of $Q_0$ forces an eviction and by induction, each arrival of (3) forces an eviction. Finally, observe that all the predictions are accurate except those for pages arriving in (3), in which case they are off by at most $k + j + 1\le 2k$. Thus, over a single phase, $\eta\le 2jk$. Putting all of these observations together, we get $\ALG_\cA\ge j+1\ge\OPT + j - 1$, with $j - 1 = \Omega(\eta/k)$.

  To make $\OPT$ arbitrarily large, note that we can simply repeat the above phase multiple times in sequence; the same analysis holds, with all the values scaling linearly. Hence we have $\ALG_\cA\ge\OPT + \Omega(\eta/k)$ for arbitrarily large $\OPT$ over the desired range of $\eps/k$, so the theorem follows.
\end{proof}

\section*{Acknowledgments}

I would like to thank Jelani Nelson for advising this project and Bailey Flanigan for providing many helpful references.

\printbibliography

\end{document}